\newcommand{\remove}[1]{}
\newtheorem{thm}{Theorem}[section]  
\newtheorem{claim}[thm]{Claim}
\newtheorem{lem}[thm]{Lemma}
\newtheorem{define}[thm]{Definition}
\newtheorem{cor}[thm]{Corollary}
\newtheorem{THM}{Theorem}
\newtheorem{remark}[thm]{Remark}
\DeclarePairedDelimiter\ceil{\lceil}{\rceil}
\newcommand{\bigo}[1]{O\left(#1\right)}
\newcommand{\set}[1]{\{#1\}}
\newcommand{\BF}{\set{0,1}}
\newcommand{\adj}{\mathrm{adj}}
\newcommand{\vect}[2]{#1_1,\cdots,#1_#2}
\newcommand{\bigexp}[1]{\exp\left(#1\right)}
\newcommand{\twovect}[2]{\left[\begin{matrix}
#1\\
#2\\
\end{matrix}\right]}
\newcommand{\fourmatrix}[4]{\left[\begin{matrix}
#1& #2\\
#3& #4\\
\end{matrix}\right]}
\def\tb{\textbf}
\def\F{{\mathbb{F}}}
\def\Z{{\mathbb{Z}}}
\def\cP{{\cal P}}
\def\cS{{\mathcal S}}
\def\cF{\mathcal F}
\def\cU{\mathcal U}
\def\cR{\mathcal{R}}
\def\cQ{\mathcal{Q}}
\def\cA{\mathcal{A}}
\def\cP{\mathcal{P}}
\def\cV{\mathcal{V}}
\def\ba{{\mathbf a}}
\def\bc{{\mathbf c}}
\def\bu{{\mathbf u}}
\def\bv{{\mathbf v}}
\def\bx{{\mathbf x}}
\def\bz{{\mathbf z}}
\def\bgam{\gamma}
\def \blam{{\boldsymbol\lambda}}
\newcommand{\inpro}[2]{\langle #1,#2 \rangle}
\def\_{\,\,\,\,\,}
\newcommand{\ignore}[1]{}
\begin{document}

\title{2-Server PIR with sub-polynomial communication}

\author{Zeev Dvir\thanks{Department of Computer Science and Department of Mathematics,
Princeton University.
Email: \texttt{zeev.dvir@gmail.com}. Research  supported by NSF grants CCF-1217416
and CCF-0832797.} \and
Sivakanth Gopi\thanks{Department of Computer Science, Princeton University.
Email: \texttt{sgopi@cs.princeton.edu}.}}

\date{}
\maketitle

\thispagestyle{empty}
\pagenumbering{arabic}

\begin{abstract}
A 2-server Private Information Retrieval (PIR) scheme allows a user to retrieve the $i$th bit of an $n$-bit database replicated among two servers (which do not communicate) while not revealing any information about $i$ to either server. In this work we construct a 1-round 2-server PIR with total communication cost $n^{O({\sqrt{\log\log n/\log n}})}$. This improves over the currently known 2-server protocols which require $O(n^{1/3})$ communication  and matches the communication cost of known $3$-server PIR schemes.  Our improvement comes from reducing the number of servers in  existing  protocols, based on Matching Vector Codes,  from 3 or 4 servers to 2. This is achieved by viewing these protocols in an algebraic way (using polynomial interpolation) and extending them using partial derivatives. 
\end{abstract}

\thispagestyle{empty}

\pagenumbering{arabic}

%%%%%%%%%%%%%%%%%%%%%%%%%%%%%%%%%%%%%%
%%%%%%%%%%%%%%%%%%%%%%%%%%%%%%%%%%%%%%
%%%%%%%%%%%%%%%%%%%%%%%%%%%%%%%%%%%%%%
\section{Introduction}
Private Information Retrieval (PIR) was first introduced  by Chor, Goldreich, Kuzhelevtiz and Sudan \cite{ChorKGS98}. In a $k$-server PIR scheme, a user can retrieve the $i$th bit $a_i$ of a $n$-bit database $\ba=(\vect{a}{n})\in\BF^n$ replicated among $k$ servers (which do not communicate) while giving no information about $i$ to any server. The goal is to design PIR schemes that minimize the communication cost which is the worst case number of bits transferred between the user and the servers in the protocol. The trivial solution which works even with one server is to ask a server to send the entire database $\ba$, which has communication cost $n$. 
%, whereas even without privacy the communication cost must at least be $1+\log n$. If the user sends less than $\log n$ bits then there are less than $2^{\log n}=n$ functions of the database that the user can compute. \\

When $k=1$ the trivial solution cannot be improved \cite{ChorKGS98}. But when $k\ge 2$, the communication cost can be brought down significantly. In \cite{ChorKGS98}, a 2-server PIR scheme with communication cost $O(n^{1/3})$ and a $k$-server PIR scheme with cost $\bigo{k^2\log k n^{1/k}}$ were presented. The $k$-server PIR schemes were improved further in subsequent papers \cite{Ambainis97,BeimelI01,BeimelIKR02}. In \cite{BeimelIKR02}, a $k$-server PIR scheme with cost $n^{\bigo{\frac{\log\log k}{k\log k}}}$ was obtained. This was the best for a long time until the breakthrough result of Yekhanin\cite{Yekhanin08} who gave the first $3$-server scheme with sub-polynomial communication (assuming a number theoretic conjecture). Later, Efremenko\cite{Efremenko09} gave an unconditional $k$-server PIR scheme with sub-polynomial cost for $k\ge 3$ which were slightly improved in \cite{ItohS10} and \cite{CheeFLWZ13}. These new PIR schemes follow from the constructions of constant query smooth Locally Decodable Codes (LDCs) of sub-exponential length called Matching Vector Codes (MVCs)\cite{DvirGY10}. A $k$-query LDC \cite{KT00} is an error correcting code which allows the receiver of a corrupted encoding of a message to recover the $i$th bit of the message using only $k$ (random) queries. In a {\em smooth} LDC, each query of the reconstruction algorithm is uniformly distributed among the code word symbols. Given a $k$-query smooth LDC, one can construct a $k$-server PIR scheme by letting each server simulate one of the queries. Despite the advances in $3$-server PIR schemes, the 2-server PIR case is still stuck at $O(n^{1/3})$ since 2-query LDCs provably require exponential size encoding \cite{KerenidisW03} (which translates to $\Omega(n)$ communication cost in the corresponding PIR scheme). For more information on the relation between PIR and LDC and the constructions of sub-exponential LDCs and sub-polynomial cost PIR schemes with more than 2 servers we refer to the survey \cite{Yekhanin12}.  

On the lower bounds side, there is very little known. The best known lower bound for the communication cost of a 2-server PIR is $5\log n$ \cite{WehnerW05} whereas the trivial lower bound is $\log n$. In \cite{ChorKGS98}, a lower bound of $\Omega(n^{1/3})$ is conjectured. In \cite{RazborovY06}, an $\Omega(n^{1/3})$ lower bound was proved for a restricted model of 2-server PIR called bilinear group based PIR. This model encompasses all the previously known constructions which achieve $O(n^{1/3})$ cost for 2-server PIR. We elaborate more on the relation between this model and our construction after we present our results below.

PIR is extensively studied and there are several variants of PIR in literature. The most important variant with cryptographic applications is called Computationally Private Information Retrieval (CPIR). In CPIR, the privacy guarantee is based on computational hardness of certain functions i.e. a computationally bounded server cannot gain any information about the user's query. In this case, non-trivial schemes exist even in the case of one server under some cryptographic hardness assumptions. For more information on these variants of PIR see \cite{Gasarch,Gasarch04,Lipmaa}. In this paper, we are only concerned with information theoretic privacy i.e. even a computationally unbounded server cannot gain any information about the user's query which is the strongest form of privacy.

%In this paper we present the first sub-polynomial 2-server PIR scheme with communication cost $n^{\bigo{\sqrt{\log\log n/\log n}}}=n^{o(1)}$.
\subsection{Our Results}

We start with a formal definition of a 2-server PIR scheme. A 2-server PIR scheme involves two servers $\cS_1$ and $\cS_2$ and a user $\cU$. A database $\ba=(\vect{a}{n})\in \BF^n$ is replicated between the servers $\cS_1$ and $\cS_2$. We assume that the servers cannot communicate with each other. The user $\cU$ wants to retrieve the $i$th bit of the database $a_i$ without revealing any information about $i$ to either server. The following definition is from \cite{ChorKGS98}:

\begin{define}

A 2-server PIR protocol is a triplet of algorithms $\cP=(\cQ,\cA,\cR)$. At the beginning, the user $\cU$ obtains a random string $r$. Next $\cU$ invokes $\cQ(i,r)$ to generate a pair of queries $(q_1,q_2)$. $\cU$ sends $q_1$ to $\cS_1$ and $q_2$ to $\cS_2$. Each server $S_j$ responds with an answer $ans_j=\cA(j,\ba,q_j)$. Finally, $\cU$ computes its output by applying the recovery algorithm $\cR(ans_1,ans_2,i,r)$. The protocol should satisfy the following conditions:
\begin{itemize}
\item \tb{Correctness : } For any n, $\ba\in \BF^n$ and $i\in [n]$, the user the outputs the correct value of $a_i$ with probability 1 (where the probability is over the random strings $r$) i.e. $\cR(ans_1,ans_2,i,r)=a_i$
\item \tb{Privacy : } Each server individually learns no information about $i$ i.e.   for any fixed database $\ba$ and for $ j=1,2$, the distributions of $q_j(i_1,r)$ and $q_j(i_2,r)$ are identical for all $i_1,i_2\in [n]$ when $r$ is randomly chosen.
\end{itemize}
The communication cost of the protocol is the total number of bits exchanged between the user and the servers in the worst case.
\end{define}

$k$-server PIR is similarly defined, with the database replicated among $k$ servers which cannot communicate between themselves. We only defined 1-round PIR i.e. there is only one round of interaction between the user and the servers. All known constructions of PIR schemes are 1-round and it is an interesting open problem to find if interaction helps. We now state our main theorem:

\begin{THM}
\label{mainthm}
There exists a 2-server PIR scheme with communication cost 
 $n^{\bigo{\sqrt{\frac{\log\log n}{\log n}}}}$.
\end{THM}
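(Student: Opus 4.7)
The plan is to take the matching-vector based $k$-server PIR of Efremenko (and its Itoh--Suzuki / Chee et al.\ refinements) and reduce its server count from $k\ge 3$ down to $2$ by letting each of the two servers return a block of partial derivatives in addition to a single polynomial evaluation. Recall the algebraic picture: one starts from a matching vector family $\set{(u_j,v_j)}_{j=1}^{n}$ in $\Z_m^h$ with $\inpro{u_j}{v_j}=0$ and $\inpro{u_j}{v_k}$ lying in a small set $S\subset\Z_m\setminus\set{0}$ whenever $j\ne k$, and encodes the database as $P(\bx)=\sum_{j=1}^{n}a_j\,\bx^{v_j}$ over the cyclotomic ring $\Z[g]$, where $g$ is a primitive $m$-th root of unity and $\bx^{v_j}=\prod_\ell x_\ell^{v_j[\ell]}$. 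For a random $\gamma\in\Z[g]^h$ and any scalar $s$,
\[
P\bigl(\gamma\odot g^{s\,u_i}\bigr)\;=\;a_i\,\gamma^{v_i}\;+\;\sum_{j\ne i}a_j\,\gamma^{v_j}\,g^{\,s\inpro{u_i}{v_j}},
\]
and with $k=|S|+1$ servers each evaluating at a different $s$, a fixed linear combination over $s$ cancels the $j\ne i$ terms and recovers a known multiple of $a_i$.

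The new ingredient is that with only two servers we can still obtain enough evaluations by substituting derivatives for the missing servers. The univariate function $s\mapsto P(\gamma\odot g^{s u_i})$ is (after the change of variable $t=g^s$) a Laurent polynomial, so its value together with its derivatives in $s$ of order $\le d$ at one point already determines $d+1$ "virtual" evaluations at nearby shifts via a Taylor expansion. These $s$-derivatives can be expressed as linear combinations of the multivariate partials $\D^\alpha P$ with $|\alpha|\le d$ evaluated at the server's query point, each of which is another sum of the database entries against fixed monomials and can therefore be computed by a server that knows only $\ba$ and its query. Two servers returning all partials up to order $d$ thus jointly yield $2(d+1)$ virtual evaluations, from which the decoder applies the same linear combination as in the multi-server case as soon as $|S|\le 2d+1$.

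For parameters, I would instantiate with a Grolmusz-style matching vector family over composite $m$, which supports $n=\exp(h^{\Omega(1)})$ with $|S|$ slightly super-constant. Each server's answer has size $\binom{h+d}{d}$ ring elements (the number of partials of total degree $\le d$ in $h$ variables), and the query has size $O(h\log m)$. Choosing $d\approx |S|/2$ and balancing $h$ against $d$ as a function of $n$ in the standard Efremenko way yields total communication $n^{O(\sqrt{\log\log n/\log n})}$, matching the current $3$-server rate.

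The main obstacle I expect is twofold. First, the index $i$ (equivalently $u_i$) must remain hidden from each server, so the server cannot contract its derivatives in the direction $u_i$ on its own; it must ship all $\binom{h+d}{d}$ partials and let the user perform the directional contraction only at decoding time, which forces careful accounting to keep the answer length sub-polynomial. Second, one must check that the Hermite-style decoding really isolates $a_i$ over $\Z$ after reducing from $\Z[g]$ to $\BF$, i.e.\ that differentiation is compatible with the relevant cyclotomic ideal and that the linear combination identities still hold. Privacy is easy: the server-side augmentation with derivatives does not alter the query distribution, so it is inherited from the underlying MVC-PIR.
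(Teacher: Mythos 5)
Your proposal correctly identifies the core idea of the paper: start from the Efremenko/matching-vector framework, and halve the number of servers by having each server return first-order derivative information in addition to a point evaluation, then recover the free coefficient $c_0$ (a known multiple of $a_\tau$) by Hermite-type interpolation at two points. Privacy and the overall parameter regime are also correctly analyzed. However, there are several gaps and inaccuracies that need to be addressed before this becomes a proof.

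First, the phrase ``derivatives in $s$'' does not make literal sense: $s$ ranges over the discrete set $\Z_m$, so one cannot Taylor-expand in $s$. The paper handles this by observing that $G(t)=F(\gamma^{\bz+t\bv_\tau})$, after substituting $T=\gamma^{t}$, is a \emph{formal} polynomial $g(T)=\sum_{\ell\in S\cup\{0\}}c_\ell T^\ell$ of degree $\le m-1$, and it introduces a formal Euler-type operator $F^{(1)}(\bx)=\sum_{\bz}(c_\bz\cdot\bz)\bx^\bz$ whose contraction with $\bv_\tau$ at the query point equals $\sum_\ell \ell c_\ell\gamma^{t\ell}$, i.e.\ exactly the quantity needed for Hermite interpolation. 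Your ``multivariate partials $\partial^\alpha P$'' are not quite the right object over the finite rings used (and over $\Z[g]$ the chain-rule computation you gesture at does not apply because there is no continuous parameter), so you should replace them by this multiplicative/Euler derivative.

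Second, the ring. The paper deliberately works over $\cR_{m,m}=\Z_m[\gamma]/(\gamma^m-1)$: it has characteristic $m$, contains a (formal) element $\gamma$ of order $m$, and — crucially — the monomials $1,\gamma,\ldots,\gamma^{m-1}$ are $\Z_m$-linearly independent (Lemma~\ref{nonzerolemma}), so distinct coefficients of the restricted polynomial genuinely carry independent information. Your cyclotomic ring $\Z[g]$ with $g$ a primitive $m$-th root of unity has rank $\varphi(m)<m$ over $\Z$, so the powers of $g$ satisfy nontrivial linear relations; moreover it is infinite, so you must argue separately that the servers' answers have $n^{o(1)}$ bit-size. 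Both issues are avoidable but are not free. Relatedly, because $\cR_{m,m}$ is not an integral domain, the paper cannot simply invert the Hermite matrix; it verifies that $\det(M)\ne 0$ (Eq.~\ref{determinant}) and that $c_0$ is either $0$ or a non-zero-divisor, and then recovers $c_0$ via the adjugate (Remark~\ref{remark-adjugate}). You flag this concern but do not resolve it; in your write-up this step must be made explicit.

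Third, and most substantively, the ``balance $h$ against $d$'' and ``$|S|$ slightly super-constant'' claims are misleading. For Grolmusz families over $m=p_1\cdots p_r$ the set $S$ has exactly $2^r-1$ elements, a constant; $n$ grows with the dimension $k$, not with $|S|$. More importantly, higher-order derivatives do \emph{not} supply new information with these families: for every $\ell\in S$ one has $\ell^2\equiv\ell\bmod m$, so $g^{(2)}$, $g^{(3)},\ldots$ coincide (modulo $m$) with $g^{(1)}$, and the Hermite rows repeat. The paper explicitly points this out as the obstruction to pushing the method further. Your proposal therefore only works with $d=1$; luckily, for $m=6$ Grolmusz gives $S=\{1,3,4\}$, so $g(T)=c_0+c_1T+c_3T^3+c_4T^4$ has exactly $4=2(d+1)$ unknowns, and two evaluations plus two first-order derivatives do suffice. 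You should state this instead of suggesting $d$ is a tunable parameter. With these corrections in place, your outline matches the paper's construction.
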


The  definition of a 2-server PIR scheme can be generalized in an obvious manner to any number of servers. In \cite{Efremenko09} a  $2^r$-server PIR schemes was given with $n^{\bigo{({\log\log n}/{\log n})^{1-1/r}}}$ communication cost for any $r\ge 2$. Using our techniques, we can reduce the number of servers in this scheme by a factor of two. That is, we prove the following stronger form of Theorem~\ref{mainthm}.

\begin{THM}\label{THM-kserver}
For any $r \geq 2$,	there exists a $2^{r-1}$-server PIR scheme with communication cost $n^{\bigo{({\log\log n}/{\log n})^{1-1/r}}}$. 
\end{THM}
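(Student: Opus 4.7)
My plan is to reduce the number of servers in Efremenko's $2^r$-server PIR scheme \cite{Efremenko09} by a factor of two, by making each server return both a value of the encoding polynomial and sufficient partial-derivative information to simulate a second evaluation through Hermite-style interpolation.

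Step 1 (setup). Recall Efremenko's construction: from a Grolmusz matching-vector family $(U,V)$ in $\Z_m^h$ with $m=p_1\cdots p_r$ and inner-product set $\{0\}\cup S$ of size $2^r$, the database is encoded by the multilinear polynomial $F(\bx)=\sum_j a_j\bx^{u_j}$. To retrieve $a_i$, the user picks a uniform mask $\by$, lets $\gamma$ be a primitive $m$-th root of unity, and sends $\bx^{(s)}:=(y_l\gamma^{sv_{i,l}})_l$ to the $s$-th server for each $s\in\{0\}\cup S$. The answer is
\[
g(s)\;:=\;F(\bx^{(s)})\;=\;\sum_{t\in\{0\}\cup S} b_t\,\gamma^{st},\qquad b_0=a_i\by^{u_i},
\]
and $b_0$ (hence $a_i$) is recovered by inverting the $2^r\times 2^r$ Vandermonde matrix $(\gamma^{st})_{s,t}$.

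Step 2 (halving via derivatives). I would keep only $2^{r-1}$ of these queries, at distinct parameters $s_1,\ldots,s_{2^{r-1}}$, but have each corresponding server also return the $h$ partial derivatives $\partial_l F(\bx^{(s)})$. Treating $s$ as a continuous parameter via $\gamma^s:=e^{2\pi\mathrm{i}s/m}$, the chain rule gives
\[
g'(s)\;=\;\tfrac{2\pi\mathrm{i}}{m}\sum_{l=1}^h v_{i,l}\,\bx^{(s)}_l\,\partial_l F(\bx^{(s)})\;=\;\tfrac{2\pi\mathrm{i}}{m}\sum_{t\in\{0\}\cup S} t\,b_t\,\gamma^{st},
\]
so the user can assemble $g'(s_j)$ from the server's answers (using her knowledge of $\bv_i$ and $\bx^{(s)}$). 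In all, this provides $2\cdot 2^{r-1}=2^r$ linear functionals of the $2^r$ unknowns $b_t$.

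Step 3 (correctness and cost). The key lemma is that the ``confluent Vandermonde'' matrix with rows $(\gamma^{s_jt})_{t}$ and $(t\gamma^{s_jt})_{t}$ for $j=1,\ldots,2^{r-1}$, indexed by $t\in\{0\}\cup S$, is invertible for any pairwise distinct $s_j\in\Z_m$; equivalently, no nonzero polynomial $P(z)=\sum_{t\in\{0\}\cup S}b_tz^t$ vanishes to order $\ge 2$ at every $z_j=\gamma^{s_j}$. Given this, $a_i$ is a fixed linear combination of the $2^{r-1}(h+1)$ received symbols. Privacy is unchanged from Efremenko, since each server still sees a uniformly random $\bx^{(s)}$. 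The per-server answer blows up by a factor of $h+1$, but Grolmusz's construction gives $h$ polylogarithmic in $n$, so this is absorbed into the claimed $n^{O((\log\log n/\log n)^{1-1/r})}$ bound.

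The main obstacle is proving the confluent-Vandermonde lemma in Step 3. Writing $P(z)=\prod_j(z-z_j)^2\cdot R(z)$ forces $\deg P\ge 2^r$, whereas $P$ is $2^r$-sparse with support $\{0\}\cup S\subset\Z_m$. Classical sparsity-vs-roots theorems (Descartes, Khovanskii) bound positive real roots, whereas here the roots lie on the unit circle, so an ad hoc argument exploiting the Grolmusz/CRT tensor structure $\Z_m\cong\Z_{m/p_r}\times\Z_{p_r}$---perhaps inductively on $r$, or via a careful explicit choice of the $s_j$---will be required; I expect this to be the main technical content of the proof.
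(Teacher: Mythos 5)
Your high-level plan matches the paper's: keep the MV-code/Efremenko framework but ask each of $2^{r-1}$ servers for the polynomial value \emph{and} the full gradient at its query point, then do Hermite-style interpolation. The derivative identity in your Step 2 is essentially correct (the paper phrases it algebraically via a formal operator $F^{(1)}(\bx)=\sum c_\bz\,\bz\,\bx^\bz$ over the ring $\Z_m[\gamma]/(\gamma^m-1)$ rather than via $d/ds$ over $\C$, which avoids having to ship complex numbers around, but that is a cosmetic repair).

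The genuine gap is exactly the lemma you flag at the end, and you also aim at the wrong target. You want to show the $2^r\times 2^r$ confluent Vandermonde matrix is invertible ``for any pairwise distinct $s_j$.'' The paper proves something weaker and much easier: for the specific choice $t_i=i-1$, there is an explicit row vector $\blam$ with $\blam M=[\mu,0,\dots,0]$ where $\mu$ is a non-zero-divisor modulo each $p_i$; that alone lets you read off $c_0$ (via CRT) without ever inverting $M$. More importantly, your proposed route to the lemma --- treating it as a sparsity-versus-double-roots question and hoping for a Descartes/Khovanskii-type bound --- is a dead end, as you yourself observe: such bounds do not control roots on the unit circle. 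The paper's argument does not go through sparsity at all. It leans on the arithmetic structure of Grolmusz's set $S$: every $\ell\in S$ reduces to $0$ or $1$ modulo each $p_i$. Working modulo a single $p_i$, one takes $f_i(x)=\prod_{\ell\in S,\ \ell\equiv 0\ (p_i)}(x-\gamma^\ell)$, a polynomial of degree $2^{r-1}-1=q-1$, CRT-lifts the $f_i$ to a single $f\in\cR[x]$, and sets $h(\ell)=(1-\ell)f(\gamma^\ell)$. Then $h$ vanishes on $S$ because each $\ell\in S$ kills one of the two factors mod each $p_i$ (either $f_i(\gamma^\ell)=0$ when $\ell\equiv 0$, or $1-\ell\equiv 0$ when $\ell\equiv 1$), while $h(0)=f(1)$ is shown nonzero mod each $p_j$ by evaluating $\prod_{\ell\equiv 0\ (p_j)}(1-x^\ell)$ at an element of order $m/p_j$ in $\overline{\F_{p_j}}$ (Lemma~\ref{lem-order}). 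The coefficients of $f$ and $-f$ give your $\alpha_i,\beta_i$. So the missing piece of your proof is not a sparsity theorem but a short explicit CRT construction exploiting $\ell\bmod p_i\in\{0,1\}$; without that you do not have a proof. A secondary point: the paper's lemma also only delivers $a_\tau$ modulo each $p_i$ separately (hence the database alphabet $\Z_m$ and the final CRT step), which is a detail your sketch omits.
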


We note that the proof of Theorem~\ref{THM-kserver} actually allows the database symbols to be in the larger alphabet $\Z_m$, where $m$ is the composite over which we construct the MV family.

There was some work on decreasing the $2^r$ query complexity of the construction of Matching Vector Codes in \cite{Efremenko09}. A query complexity of $9\cdot 2^{r-4}$ for $r\ge 6$ was achieved in \cite{ItohS10} while keeping the encoding length the same. This was improved in \cite{CheeFLWZ13} to $3^{\ceil{r/2}}$ for $2\le r\le 103$ and $(\frac{3}{4})^{51}\cdot 2^r$ for $r\ge 104$. Using these LDCs directly to get a PIR scheme will do better than our scheme when the number of servers is more than 26, whereas our scheme will do better than these when the number of servers are less than 9.

\subsection{Related work}

\paragraph{Polynomial lower bounds for bilinear group based PIR: } In \cite{RazborovY06}, an $\Omega(n^{1/3})$ lower bound was shown  for a restricted model of  2-server PIR schemes. This lower bound holds for schemes that are both {\em bilinear} and {\em group based}. Our scheme can be made into a bilinear scheme\footnote{Our scheme can infact be made linear and using a simple transformation given in \cite{RazborovY06}, any linear scheme can be converted to a bilinear scheme} (see section \ref{sec-overZm}) over the field $\F_3$ of three elements. However, it does not satisfy the property of being group based as defined in \cite{RazborovY06}. Our scheme does satisfy a weaker notion of {\em employing a group-based secret sharing scheme} (another technical term defined in \cite{RazborovY06}). The difference between these two notions (of being group based as opposed to employing a group based secret sharing scheme) is akin to the difference between LCCs and LDCs (LCCs being the stronger notion). 
%In both definitions, the database is represented by the values of a function over a subset of a group. In a group based PIR the  user should be able to recover the value of that function at {\em every} group element. 
In group based PIR, the database is represented by the values of a function over a subset of a group but the user should be able to recover the value of that function at {\em every} group element. Our scheme encodes the database as a function over a group and the user will only be able to recover the bits of the database from the function.

\paragraph{2-query LDCs over large alphabet:} The reader familiar with the exponential lower bounds for 2-query LDCs \cite{KerenidisW03} would wonder why our construction does not violate these bounds. The reason is that, when one translates 2-server PIR schemes into LDC, the resulting alphabet of the code can be quite large. Formally, a scheme with communication cost $s$ will translate into an LDC $C : \{0,1\}^n \mapsto (\{0,1\}^s)^{2^s}$ (with the blocks corresponding to all possible answers by the servers). Thus, each one of the two queries  used by the decoder is a string of $s$ bits. The known lower bounds for such LDCs are  exponential only as long as $s << \log(n)$ and so our construction does not violate them. Hence, our main theorem also gives the first construction of a sub-exponential 2-query LDC over an alphabet of size $2^{n^{o(1)}}$.

\subsection{Proof Overview}
On a very high level, the new protocol combines the existing 2-server scheme of \cite{WoodruffY05}, which uses polynomial interpolation using derivatives, with  Matching Vector Codes (MV Codes) \cite{Yekhanin08,Efremenko09}. In particular, we make use of the view of MV codes as polynomial codes, developed in \cite{DvirGY10}. This short overview is meant as a guide to the ideas in the construction and assumes some familiarity with \cite{WoodruffY05} and \cite{DvirGY10} (a detailed description  will follow in the next sections). The 2-server scheme of \cite{WoodruffY05} works by embedding the database $\ba = (a_1,\ldots,a_n)$ as evaluations of a degree 3 polynomial $F(x_1,\ldots,x_k)$ over a small finite field $\F_q$ with $k \sim n^{1/3}$. To recover the value $a_i = F(P_i)$ the user passes a random line through the point $P_i \in \F_q^k$, picks two random points $Q_1,Q_2$ on that line and sends the point $Q_i$ to the $i$th server. Each server responds with the value of $F$ at $Q_i$ and the values of all partial derivatives $\partial F / \partial x_j, j=1,\ldots,k$  at that point. The restriction of $F$ to the line is a univariate degree 3 polynomial and the user can recover the values of this polynomial at two points as well as the value of its derivative at these points. These four values (two evaluations plus two derivatives) are enough to recover the polynomial and so its value at $P_i$. The key is that each server's response only depends on the point $Q_i$ (which is completely random). The user can compute the derivatives of the restricted polynomial from these values (knowing the line equation).

To see how MV codes come into the picture we have to describe them in some detail. An MV family is a pair of lists $\cU = (\bu_1,\ldots,\bu_n)$, $\cV = (\bv_1,\ldots,\bv_n)$ with each list element $\bu_i$ and $\bv_j$ belonging to $\Z_m^k$ and $m$ is a small integer. These lists must satisfy the condition that $\inpro{\bu_i}{\bv_j}$ (taken mod $m$) is zero iff $i=j$. When $m$ is a composite, one can construct  such families of vectors of size $n = k^{\omega(1)}$ \cite{Grolmusz99} (this is impossible if $m$ is prime). From such a family we can construct an $m$-query LDC as follows: given a message $\ba = (a_1,\ldots,a_n) \in \{0,1\}^n$ define the polynomial $F(x_1,\ldots,x_k) = \sum_{i=1}^n a_i\bx^{\bu_i}$ (we denote $\bx^\bc = x_1^{c_1}\ldots x_k^{c_k}$). We think of $F$ as a polynomial with coefficients in some finite field $\F_q$ containing an element $\gamma \in \F_q$ of order $m$. The final encoding of $\ba$ is the evaluations of $F$ over all points in $\F_q^k$ of the form $\gamma^{\bc} = (\gamma^{c_1},\ldots,\gamma^{c_k})$ for all $\bc \in \Z_m^k$. To recover $a_i$ in a `smooth' way, we pick a random $\bz \in \Z_m^k$ and consider the restriction of $F$ to the `multiplicative line' given by $L = \{ \gamma^{\bz + t \bv_i}\,|\, t \in \Z_m\}$. That is, we denote $G(t) = F(\gamma^{\bz + t \bv_i})$. In \cite{DvirGY10} it was observed that this restriction can be seen as a polynomial $g(T)$ of degree at most $m-1$ in the new `variable' $T = \gamma^t$ and so can be reconstructed from the $m$ values on the line $g(\gamma^t) = G(t), t = 0,1,\ldots,m-1$. The final observation is that $g(0)$ is a nonzero multiple of $a_i$ (since the only contribution to the free coefficient comes from the monomial $a_i\bx^{\bu_i}$) and so we can recover it if we know $g(T)$.

Our new protocol combines these two constructions by using the MV code construction and then asking each server for the evaluations of $F$ at a point, as well as the values of a certain differential operator (similar to first order derivatives) at these points. For this to work we need two ingredients. The first is to replace the field $\F_q$ with a certain ring which has characteristic $m$ and an element of order $m$ (we only use $m=6$ and can take the polynomial ring $\Z_m[\gamma]/(\gamma^6 - 1)$). The second is an observation that, in known MV families constructions \cite{Grolmusz99}, the inner products $\inpro{\bu_i}{\bv_j}$ that are nonzero (that is, when $i \neq j$) can be made to fall in a small set. More precisely, over $\Z_6$, the inner products are either zero or in the set  $\{1,3,4\}$. This means that the restricted polynomial only has nonzero coefficients corresponding to powers of $T$ coming from the set $\{0,1,3,4\}$. Such a polynomial has four degrees of freedom and can be recovered from two evaluations and two derivatives (of order one). We are also able to work with arbitrary MV families by using second order derivatives at two points (which are sufficient to recover a degree 5 polynomial).

\subsection{Organization}
In section \ref{preliminaries} we give some preliminary definitions and notations. In section \ref{oldconstruction}, we review the construction of a 2-server PIR scheme with $O(n^{1/3})$ communication cost which is based on polynomial interpolation with partial derivatives \cite{WoodruffY05}. In section \ref{newconstruction}, we present our new construction of sub-polynomial 2-server PIR schemes and some of its variants. Then, in Section~\ref{sec-kserver} we analyze the generalization to more servers.  We conclude in Section~\ref{sec-conclude} with some remarks on future directions.

\section{Preliminaries}
\label{preliminaries}

\paragraph{Notations:}
We will use bold letters like $\bu,\bv,\bz$ etc. to denote vectors. The inner product between two vectors $\bu=(\vect{u}{k}),\bv=(\vect{v}{k})$ is denoted by $\inpro{\bu}{\bv}=\sum_{i=1}^k u_iv_i$. For a commutative ring $\cR$ we will denote by $\cR[\vect{x}{k}]$ the ring of polynomials in formal variables $x_1,\ldots,x_k$ with coefficients in $\cR$. We will use the notation $\bx^\bz$ with $\bx=(\vect{x}{k}),\ \bz=(\vect{z}{k}) \in \Z^k$ to denote the monomial $\prod_{i=1}^k x_i^{z_i}$. So any polynomial $F(\bx)\in \cR[\vect{x}{k}]$ can be written as $F(\bx)=\sum_{\bz} c_{\bz}\bx^{\bz}$. 

$\Z_m=\Z/m\Z$ is the ring of integers modulo $m$. When $\bu\in \Z_m^k$, $\bx^\bu$ denotes $\bx^{\tilde{\bu}}$ where $\tilde{\bu}\in \set{0,1,\cdots,m-1}^k$ is the unique vector such that $\bu\equiv \tilde{\bu} \mod m$. $\F_q$ denotes the finite field of size $q$.

\subsection{The rings $\cR_{m,r}$}
% Given a commutative ring $\cR$ (with unity), we can form the polynomial ring $\cR[\vect{x}{k}]$ which is the ring of all multivariate polynomials in variables $\vect{x}{k}$ with coefficients from $\cR$. $\ideal{p(\vect{x}{k})}$ denotes the ideal generated by the polynomial $p(\vect{x}{k})$ and $\cR[\vect{x}{k}]/\ideal{p(\vect{x}{k})}$ is the quotient ring. 

% In the case of univariate polynomials we denote $\cR[x]/\ideal{p(x)}$ by $\cR[\gamma]$ where $\gamma$ is implicitly assumed to satisfy $p(\gamma)=0$ i.e. in $\cR[\gamma]$, $f(\gamma)\equiv f(\gamma)+p(\gamma)g(\gamma)$ for any polynomials $f,g$. 

For our construction it will be convenient (although not absolutely necessary, see Section~\ref{sec-overZm}) to work over a ring which has characteristic $6$ and contains an element of order $6$. We now discuss how to construct such a ring in general. 
 
Let $m>1$ be an integer and let $\gamma$ be a formal variable. We denote by $$\cR_{m,r} = \Z_m[\gamma]/(\gamma^r-1)$$ the ring of univariate polynomials $\Z_m[\gamma]$ in $\gamma$ with coefficients in $\Z_m$ modulo the identity $\gamma^r = 1$.\footnote{The rings $\cR_{m,r}$ are sometimes denoted by $\Z_m[C_r]$ and referred to as the {\em group ring} of the cyclic group $C_r$ with coefficients in $\Z_m$. See e.g., \cite{KKS13,HH11} for some recent applications of these rings in cryptography.} More formally, each element $f \in \cR_{m,r}$ is represented by a degree $\leq r-1$ polynomial $f(\gamma) = \sum_{\ell=0}^{r-1}c_\ell \gamma^\ell$ with coefficients $c_i \in \Z_m$. Addition is done as in $\Z_m[\gamma]$ (coordinate wise modulo $m$) and multiplication is done over $\Z_m[\gamma]$ but using the identity $\gamma^r=1$ to reduce higher order monomials to degree $\leq r-1$. It is easy to see that this reduction is uniquely defined: to obtain the coefficient of $\gamma^\ell$ we sum all the coefficients of powers of $\gamma$ that are of the form $\ell + km$ for some integer $k \geq 0$. This implies the following lemma.

\begin{lem}
\label{nonzerolemma}
Let $f = \sum_{\ell=0}^{r-1}c_\ell \gamma^\ell$ be an element in $\cR_{m,r}$. Then, $f=0$ in the ring $\cR_{m,r}$ iff $c_i=0$ (in $\Z_m$) for all $0 \leq i \leq r-1$.
\end{lem}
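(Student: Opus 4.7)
The plan is to reduce the claim to the standard fact that polynomial division by a monic polynomial is well-defined in any polynomial ring, even when the coefficient ring is not a field. The one direction of the iff is immediate: if all $c_i$ vanish in $\Z_m$, then the polynomial $\sum_{\ell=0}^{r-1} c_\ell \gamma^\ell$ is literally the zero element of $\Z_m[\gamma]$, hence maps to $0$ in any quotient.

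For the nontrivial direction, suppose $f = 0$ in $\cR_{m,r}$. By the definition of the quotient, this means the representative polynomial $p(\gamma) = \sum_{\ell=0}^{r-1} c_\ell \gamma^\ell \in \Z_m[\gamma]$ lies in the ideal $(\gamma^r - 1)$, so there exists $q(\gamma) \in \Z_m[\gamma]$ with $p(\gamma) = q(\gamma)(\gamma^r - 1)$. I would then exploit the crucial fact that $\gamma^r - 1$ is monic. Because its leading coefficient is the unit $1 \in \Z_m$ (and in particular not a zero divisor), if $q$ is nonzero of degree $d$ with leading coefficient $a_d \neq 0$, then the product $q(\gamma)(\gamma^r - 1)$ has leading term $a_d \gamma^{d+r}$ with $a_d \neq 0$ in $\Z_m$, so its degree in $\Z_m[\gamma]$ is exactly $d+r \geq r$. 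But by construction $\deg p \leq r-1$, giving a contradiction. Therefore $q = 0$, so $p$ is the zero polynomial of $\Z_m[\gamma]$, which by definition means every $c_\ell$ equals $0$ in $\Z_m$.

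The only point that requires any care is that $\Z_m$ is not a field for general $m$, so one cannot simply invoke the Euclidean algorithm in a polynomial ring over a field. The workaround is precisely the monicity of $\gamma^r - 1$: it ensures that degrees add under multiplication by $\gamma^r - 1$, which is all that the argument above needs. I do not anticipate any real obstacle; the lemma is essentially a sanity check that the natural map from degree $\leq r-1$ polynomials into $\cR_{m,r}$ is injective, packaged for later reference when coefficients of restricted polynomials need to be read off unambiguously in the PIR construction.
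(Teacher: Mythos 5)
Your proof is correct, and it takes a slightly different route from the paper's. The paper does not give a separate proof environment for this lemma; instead it observes that the lemma follows from its concrete description of $\cR_{m,r}$, where each element is stored as a canonical degree $\leq r-1$ polynomial and multiplication reduces higher-degree monomials by ``folding'' coefficients into residue classes of exponents mod $r$. Since that reduction map is a well-defined function onto canonical representatives, two representatives agree iff all their coefficients agree, and the lemma is read off directly. You instead start from the abstract quotient-ring definition $\Z_m[\gamma]/(\gamma^r-1)$ and prove injectivity of the map from degree $\leq r-1$ polynomials by a degree argument: any element of the ideal $(\gamma^r-1)$ is $q(\gamma)(\gamma^r-1)$, and because the leading coefficient $1$ of $\gamma^r-1$ is a unit (not merely a non-zero-divisor), a nonzero $q$ of degree $d$ forces the product to have degree exactly $d+r \geq r$, contradicting $\deg p \leq r-1$. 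Both arguments turn on the same essential fact — that $\gamma^r - 1$ is monic, which is what makes the quotient well-behaved even over the non-domain $\Z_m$ — but yours is a self-contained proof from the quotient structure, while the paper's is an informal ``by construction'' remark attached to its definition of the ring. Your approach is arguably cleaner to cite and does not require trusting the well-definedness claim for the folding operation; the paper's approach has the side benefit of making explicit how arithmetic in $\cR_{m,r}$ is actually carried out, which it uses later (e.g., in Remark~\ref{nonzerodivisorgammapower}).
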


\begin{remark}\label{nonzerodivisorgammapower}
For any $t\in\set{0,1,\cdots,r-1}$, $\gamma^t$ is not a zero divisor of the ring $\cR_{m,r}$. This holds since the coefficients of $\gamma^t \cdot f(\gamma)$ are the same as those of $f(\gamma)$ (shifted cyclicly $t$ positions).
\end{remark}

% \begin{proof}
% Since $p(x) = x^r - 1$ is monic we can write any other polynomial $g(x)$ as  $g(x)=p(x)\phi(x)+r(x)$ where $\deg(r)<\deg(p)=d$. Moreover $\phi(x),r(x)$ are uniquely determined. Thus $q(\gamma)=r(\gamma)=\sum_{i=0}^{d-1} c_i \gamma^i$. The last part of the lemma follows from uniqueness.
% \end{proof}
% \begin{remark}
%  The above claim may not be true when $p(x)$ is not monic. For instance let $\cR=\Z_6$ and $p(x)=3x^2+x+1$. Then $q(\gamma)=2\gamma+2=2p(\gamma)=0$.
% \end{remark}

\subsection{Matrices over Commutative Rings}
Let $\cR$ be a commutative ring (with unity). Let $M\in \cR^{n\times n}$ be an $n\times n$ matrix with entries from $\cR$. Most of the classical theory of determinants can be derived in this setting in exactly the same way as over fields. One particularly useful piece of this theory is the Adjugate (or Classical Adjoint) matrix. For an $n \times n$ matrix $M \in \cR^{n \times n}$ the Adjugate matrix is denoted by $\adj(M) \in \cR^{n \times n}$ and has the $(j,i)$ cofactor of $A$ as its $(i,j)$th entry (recall that the cofactor is the determinant of the matrix obtained from $M$ after removing the $i$th row and $j$th column multiplied by $(-1)^{i+j}$). A basic fact in matrix theory is the following identity.
\begin{lem}[Theorem 1.7 from \cite{McDonald}]
\label{adjointlemma}
Let $M \in \cR^{n \times n}$ with $\cR$ a commutative ring with identity. Then
$M\cdot \adj(M)=\adj(M)\cdot M=\det(M)\cdot I_n$ where $I_n$ is the $n\times n$ identity matrix.
\end{lem}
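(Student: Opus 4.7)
The plan is to derive this identity from the standard formal theory of determinants, verifying that each step of the classical proof remains valid over an arbitrary commutative ring with unity $\cR$ rather than only over a field. The starting point is to define $\det(M)$ via the Leibniz formula
\[
\det(M) \;=\; \sum_{\sigma \in S_n} \sign(\sigma)\,\prod_{i=1}^n M_{i,\sigma(i)};
\]
since this is a polynomial expression with integer coefficients in the entries of $M$, it is well-defined in any $\cR$, and its basic manipulations are purely formal.

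First I would establish the cofactor (Laplace) expansion along row $i$:
\[
\det(M) \;=\; \sum_{j=1}^n (-1)^{i+j} M_{ij}\,\det(\widehat{M}_{ij}),
\]
where $\widehat{M}_{ij}$ is the $(n-1)\times(n-1)$ minor obtained by deleting row $i$ and column $j$. This follows by grouping the terms of the Leibniz formula according to the value $\sigma(i) = j$ and doing the routine sign bookkeeping; nothing in this derivation requires invertibility of elements of $\cR$.

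The one step that I would be most careful about, because it is where a naive proof via antisymmetry could fail when $2$ is a zero divisor in $\cR$, is the lemma that a matrix with two equal rows has determinant $0$. I would prove this by pairing each permutation $\sigma$ with $\sigma' = \sigma \circ (p,q)$, where $p,q$ are the indices of the two equal rows. The products $\prod_i M_{i,\sigma(i)}$ and $\prod_i M_{i,\sigma'(i)}$ become equal once one uses $M_{p,\cdot} = M_{q,\cdot}$, while $\sign(\sigma') = -\sign(\sigma)$, so each pair contributes $x + (-x) = 0$. This identity holds in any commutative ring with unity, regardless of characteristic.

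With these ingredients the lemma falls out by direct computation. The $(i,k)$ entry of $M \cdot \adj(M)$ equals, by definition of $\adj$,
\[
\bigl(M\cdot \adj(M)\bigr)_{ik} \;=\; \sum_{j=1}^n M_{ij}\,(-1)^{k+j} \det(\widehat{M}_{kj}).
\]
When $i = k$ this is precisely the Laplace expansion of $\det(M)$ along row $i$, giving $\det(M)$. When $i \neq k$ this is the Laplace expansion along row $k$ of the matrix obtained from $M$ by overwriting row $k$ with row $i$; that matrix has two equal rows, so the expression vanishes by the previous step. The symmetric identity $\adj(M)\cdot M = \det(M)\cdot I_n$ follows by applying the same argument to $M^T$, using the formal identities $\adj(M^T) = \adj(M)^T$ and $\det(M^T) = \det(M)$. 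The main (and only) obstacle is the characteristic-$2$ issue in the repeated-rows step, which the pairing argument above handles cleanly.
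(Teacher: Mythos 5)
Your proof is correct, but note that the paper does not actually prove this lemma; it simply cites it as Theorem 1.7 of McDonald's text, treating the adjugate identity over a commutative ring with unity as a standard fact. So there is no in-paper argument to compare against.

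That said, your derivation is the right one and handles the one genuinely delicate point. The Leibniz definition, the Laplace expansion obtained by grouping on $\sigma(i)=j$, the vanishing of $\det$ for a matrix with two equal rows, and the final computation of $(M\cdot\adj(M))_{ik}$ as either $\det(M)$ (when $i=k$) or the determinant of a matrix with a repeated row (when $i\neq k$) are all purely formal polynomial identities in the matrix entries, so they transfer verbatim from $\Z$ to any commutative ring with $1$. Your explicit remark about why the ``two equal rows'' step must be proved by the fixed-point-free involution $\sigma\mapsto\sigma\circ(p\,q)$ on $S_n$, rather than by the shortcut $\det(M)=-\det(M)$, is exactly the point one must be careful about when $2$ is a zero divisor (which is indeed the case for the ring $\cR_{6,6}$ used in the paper, since $3\cdot 2=0$ there). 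The reduction of $\adj(M)\cdot M=\det(M)I_n$ to the row case via transposition, using $\adj(M^T)=\adj(M)^T$ and $\det(M^T)=\det(M)$, is likewise standard and formal. The proof is complete and fully general.

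One small stylistic aside: an equivalent and slightly slicker way to package the whole argument is to observe that both sides of $M\cdot\adj(M)=\det(M)I_n$ are matrices whose entries are fixed polynomials with integer coefficients in the $n^2$ indeterminate entries of $M$; the identity holds over $\Z[x_{11},\ldots,x_{nn}]$ (e.g.\ by proving it over $\Q$, or over $\C$ where invertible matrices are dense), and therefore specializes to any commutative ring by the universal property of polynomial rings. This avoids re-proving the equal-rows lemma in positive characteristic, at the cost of invoking the specialization principle; your direct combinatorial proof is more self-contained.
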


The way we will use this fact is as follows: 

\begin{remark}\label{remark-adjugate}
Suppose $M\in \cR^{n \times n}$ has non-zero determinant and let $\ba = (a_1,\ldots,a_n)^t \in \cR^n$ be some column vector where $a_1=0$ or $a_1=c$, where $c$ is  not a zero-divisor. Then we can determine the value of $a_1$ (i.e., tell whether its $0$ or $c$) from the product $M \cdot \ba $. The way to do it is to multiply $M \cdot \ba$ from the left by $\adj(M)$ and to look at the first entry. This will give us $\det(M) \cdot a_1$ which is zero iff $a_1$ is (since $\det(M) \cdot c$ is always nonzero).
\end{remark}

\subsection{Matching Vector Families}
\begin{define}[Matching Vector Family]
Let $S\subset\Z_m\setminus\set{0}$ and let $\cF=(\cU,\cV)$ where $\cU=(\vect{\bu}{n}),\cV=(\vect{\bv}{n})$ and $\forall i\ \bu_i,\bv_i \in \Z_m^k$. Then $\cF$ is called an $S$-matching vector family of size $n$ and dimension $k$ if $\forall\ i,j$,
\begin{align*}
\inpro{\bu_i}{\bv_j}\begin{cases}
= 0 & \mbox{if } i=j\\
 \in S & \mbox{if } i\ne j
 \end{cases}
\end{align*}
If $S$ is omitted, it implies that $S=\Z_m\setminus\set{0}$.
\end{define}

\begin{thm}[Theorem 1.4 in \cite{Grolmusz99}]
\label{Grolmusz}
Let $m=p_1p_2\cdots p_r$ where $p_1,p_2\cdots, p_r$ are distinct primes with $r\ge 2$, then there exists an explicitly constructible $S$-matching vector family $\cF$  in $\Z_m^k$ of size $n\ge \bigexp{\Omega\left(\frac{(\log k)^r}{(\log\log k)^{r-1}}\right)}$ where $S=\set{a\in \Z_m: a\mod p_i \in \set{0,1}\  \forall\ i \in[r]}\setminus \set{0}$. 
\end{thm}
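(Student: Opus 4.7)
I would prove the theorem by reducing the construction of an $S$-matching vector family to the construction of a low-degree Barrington--Beigel--Rudich (BBR) style polynomial $P\in\Z_m[y_1,\dots,y_h]$ that weakly represents OR, and then applying a bilinear ``tensoring'' trick. This reduction exploits two features of composite $m=p_1\cdots p_r$: the Chinese Remainder Theorem lets one assemble a polynomial modulo $m$ from polynomials modulo each $p_i$, and a \emph{weak} representation of OR over $\Z_m$ can be achieved in degree as small as $O(h^{1/r})$, far below what is possible modulo any single prime.

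\textbf{Step 1: The BBR polynomial.} The technically central step is to construct a multilinear polynomial $P(y_1,\dots,y_h)\in\Z_m[y_1,\dots,y_h]$ of total degree $d=O(h^{1/r})$ satisfying (i) $P(\mathbf{0})=0$ in $\Z_m$; (ii) $P(\by)\bmod p_i\in\set{0,1}$ for every $\by\in\BF^h$ and every prime factor $p_i$; and (iii) $P(\by)\ne 0$ in $\Z_m$ for every $\by\ne\mathbf{0}$. Conditions (ii) and (iii) together force $P(\by)\in S$ for every nonzero Boolean $\by$. I would build $P$ prime by prime: for each $p_i$ construct $Q_i\in\Z_{p_i}[y_1,\dots,y_h]$ of degree $O(h^{1/r})$ taking values only in $\set{0,1}$, with $Q_i(\mathbf{0})=0$, and so that $\bigcap_i\set{\by\in\BF^h:Q_i(\by)=0}=\set{\mathbf{0}}$. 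The $Q_i$ are assembled from elementary symmetric polynomials $e_j$ (which evaluate to $\binom{w}{j}$ on a Hamming-weight-$w$ Boolean input), using Lucas's theorem to select a small collection of degrees $j\le d$ whose polynomials jointly cover the Hamming weights on which $Q_i$ must be nonzero, and normalizing so that the values in $\Z_{p_i}$ are exactly $0$ or $1$. Stitching the $Q_i$ into a single $P$ via CRT idempotents preserves both the degree bound and the $\set{0,1}$-residue property.

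\textbf{Step 2: Bilinear tensoring and parameter count.} Writing $P(\by)=\sum_{T\subseteq [h],\,|T|\le d} c_T\prod_{j\in T}y_j$, index coordinates by these subsets $T$; the dimension is $k=\sum_{j=0}^d\binom{h}{j}$. For each $A\in\binom{[h]}{w}$ of fixed size $w=\floor{h/2}$, set
\begin{align*}
\bu_A &= \bigl(c_T\cdot [T\subseteq A]\bigr)_T, & \bv_A &= \bigl([T\cap A=\emptyset]\bigr)_T,
\end{align*}
where $[\cdot]$ denotes the Iverson bracket. A direct computation yields
\[ \inpro{\bu_A}{\bv_B}=\sum_T c_T\cdot[T\subseteq A\setminus B]=P(\chi_{A\setminus B}), \]
where $\chi_T\in\BF^h$ is the characteristic vector of $T\subseteq[h]$. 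Since $|A|=|B|$, we have $A\setminus B=\emptyset$ iff $A=B$, so by Step 1 the inner product is $0$ when $A=B$ and lies in $S$ otherwise. This produces an $S$-matching vector family of size $n=\binom{h}{w}=\Theta(2^h/\sqrt{h})$ in dimension $k$ with $\log k=O(d\log(h/d))=O(h^{1/r}\log h)$. Solving for $h$ as a function of $\log k$ and substituting into $n\sim 2^h$ gives $n\ge\bigexp{\Omega((\log k)^r/(\log\log k)^{r-1})}$ after the standard asymptotic inversion.

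\textbf{Main obstacle.} The bilinear tensoring step and the parameter balancing are mechanical. The real technical difficulty lies entirely in Step 1: producing a BBR polynomial over composite $\Z_m$ that is simultaneously of degree $O(h^{1/r})$, non-vanishing outside the origin on $\BF^h$, and whose values land in $\set{0,1}$ modulo each prime factor (so that the nonzero inner products fall in $S$ rather than an arbitrary nonzero residue). Meeting all three requirements at once requires a careful Lucas-theorem analysis of which Hamming weights each $e_j$ is nonzero on modulo $p_i$, combined with a CRT assembly that neither inflates the degree nor destroys the $\set{0,1}$-valued property. This is the heart of Grolmusz's construction; the rest of the proof is formal manipulation.
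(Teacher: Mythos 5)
Your decomposition into a BBR-style polynomial $P$ over $\Z_m$ followed by bilinear tensoring is exactly Grolmusz's route, and the tensoring identity $\inpro{\bu_A}{\bv_B} = P(\chi_{A\setminus B})$ is correct. The gap is in the parameter balance, and it is a real one: the choice $w=\floor{h/2}$ does not give the claimed bound. With $w=\floor{h/2}$ you are asking $P$ to vanish only at $\mathbf{0}$ on all of $\{0,1\}^h$, which forces $\prod_i p_i^{t_i}>h$ and hence degree $d=\Theta(h^{1/r})$. Then $\log k=\Theta(h^{1/r}\log h)$ and $\log n=\Theta(h)$; inverting gives $\log h=\Theta(\log\log k)$, so $h^{1/r}=\Theta(\log k/\log\log k)$ and therefore only
\[
n\ \ge\ \bigexp{\Omega\!\left(\frac{(\log k)^r}{(\log\log k)^{r}}\right)},
\]
which is short of the theorem by a factor of $\log\log k$ in the exponent.

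The fix, which is what Grolmusz actually does, is to put $w$ and $h$ on different polynomial scales, e.g.\ $h=w^c$ for a fixed constant $c>1$. The crucial observation is that the inputs $\chi_{A\setminus B}$ you feed into $P$ have Hamming weight at most $w$, so condition (iii) in your Step~1 only needs to hold on the Hamming ball of radius $w$: you need $\prod_i p_i^{t_i}>w$, not $>h$, and the BBR degree drops to $d=O(w^{1/r})\ll O(h^{1/r})$. Redoing the count with these parameters: $\log k=\Theta(d\log(h/d))=\Theta(w^{1/r}\log h)$ while $\log n=\Theta(w\log(h/w))=\Theta(w\log h)$, so $\log n/\log k=\Theta(w^{1-1/r})$. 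Since $\log\log k=\Theta(\log w)$, one gets $w^{1/r}=\Theta(\log k/\log\log k)$, and hence
\[
\log n=\Theta\!\left(\frac{(\log k)^r}{(\log\log k)^{r-1}}\right),
\]
as required. So your Step~1 and Step~2 are structurally correct, but you should (a) weaken the non-vanishing requirement on $P$ to inputs of weight at most $w$, which is all the tensoring ever uses, and (b) choose $w$ a sublinear power of $h$ rather than $w\sim h/2$; your current ``standard asymptotic inversion'' as written does not produce the stated bound.
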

\begin{remark}
\label{CRT}
The size of $S$ in the above theorem is $2^r-1$ by the Chinese Remainder Theorem. Thus, there are matching vector families of size super-polynomial in the dimension of the space with inner products restricted to a set of size $2^r = |S \cup \{0\}|$. 
\end{remark}
In the special case when $p_1=2,p_2=3$, we have $m=6$ and the following corollary:
\begin{cor}
\label{Grolmuszmod6}
There is an explicitly constructible $S$-matching vector family $\cF$ in $\Z_6^k$ of size $n\ge \bigexp{\Omega\left(\frac{(\log k)^2}{\log\log k}\right)}$ where $S=\set{1,3,4}\subset \Z_6$
\end{cor}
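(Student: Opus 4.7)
The plan is to derive the corollary as a direct instantiation of Theorem \ref{Grolmusz}. I would set $r = 2$, $p_1 = 2$, $p_2 = 3$, so that $m = p_1 p_2 = 6$, which already matches the ring $\Z_6$ claimed in the statement. The size bound in Grolmusz's theorem becomes $n \ge \exp\!\left(\Omega\!\left(\tfrac{(\log k)^2}{\log\log k}\right)\right)$, exactly as required.

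The only remaining task is to verify that the set $S$ produced by Theorem \ref{Grolmusz} under this choice of primes equals $\{1,3,4\}$. By the Chinese Remainder Theorem, $\Z_6 \cong \Z_2 \times \Z_3$, and the theorem takes
\[
S = \{a \in \Z_6 : a \bmod 2 \in \{0,1\},\ a \bmod 3 \in \{0,1\}\} \setminus \{0\}.
\]
The condition $a \bmod 2 \in \{0,1\}$ is automatic in $\Z_2$, so the constraint reduces to $a \bmod 3 \in \{0,1\}$. Going through the six residues $0,1,2,3,4,5 \in \Z_6$, their images modulo $3$ are $0,1,2,0,1,2$, so the elements with $a \bmod 3 \in \{0,1\}$ are $\{0,1,3,4\}$. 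Removing $0$ leaves $S = \{1,3,4\}$, which matches the statement of the corollary.

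I do not expect any real obstacle here: both the size bound and the set $S$ drop out by plugging $r=2$, $p_1=2$, $p_2=3$ into Theorem \ref{Grolmusz} and using Remark \ref{CRT} to identify $S$ explicitly. The whole proof is essentially one line of specialization followed by the short CRT computation above.
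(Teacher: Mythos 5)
Your proposal is correct and matches the paper's own derivation: the paper obtains Corollary~\ref{Grolmuszmod6} exactly by specializing Theorem~\ref{Grolmusz} to $p_1=2$, $p_2=3$, $m=6$, and the set $S=\{1,3,4\}$ follows from the same CRT computation you carried out. Nothing further is needed.
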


\subsection{A number theoretic lemma}

We will need the following simple lemma. Recall that the {\em order} of an element $a$ in a finite multiplicative group  $G$ is the smallest integer $w \geq 1$ so that $a^w=1$.% (or infinity if no such $w$ exist).

\begin{lem}\label{lem-order}
Let $\F_p$ be a field of prime order $p$ and let $k \geq 1$ be an integer co-prime to $p$. Then, the algebraic closure of $\F_p$ contains an element $\zeta$ of order $k$.
\end{lem}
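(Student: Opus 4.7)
The plan is to exhibit the element $\zeta$ inside some specific finite subfield of the algebraic closure $\overline{\F_p}$, using the well-known fact that the multiplicative group of any finite field is cyclic. The whole argument is really a reduction to elementary number theory.

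First I would reduce the problem to finding an integer $n \geq 1$ with $k \mid p^n - 1$. Indeed, if such an $n$ exists, then the finite field $\F_{p^n} \subseteq \overline{\F_p}$ has multiplicative group $\F_{p^n}^*$ of order $p^n - 1$, and this group is cyclic. Since a cyclic group of order $N$ contains an element of order $d$ for every divisor $d \mid N$, it contains an element $\zeta$ of order $k$. That element then lies in $\overline{\F_p}$ and has the required order.

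So it remains to produce the integer $n$. Here I would use the hypothesis that $\gcd(k,p)=1$: this says that $p$ is a unit in the ring $\Z_k = \Z/k\Z$, i.e.\ $p \in (\Z/k\Z)^*$. Since $(\Z/k\Z)^*$ is a finite group, every element has finite order, so there exists $n \geq 1$ with $p^n \equiv 1 \pmod k$. Concretely, one can take $n$ to be the multiplicative order of $p$ in $(\Z/k\Z)^*$, which divides $\varphi(k)$. This gives $k \mid p^n - 1$, closing the argument.

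The only even mildly subtle point is remembering that $\overline{\F_p}$ genuinely contains $\F_{p^n}$ as a subfield (which is standard: $\F_{p^n}$ is the splitting field of $x^{p^n}-x$ over $\F_p$, and every splitting field of a polynomial over $\F_p$ embeds into $\overline{\F_p}$). Beyond that, there is no real obstacle — the coprimality hypothesis is used in exactly one place (to invert $p$ modulo $k$), and the cyclicity of $\F_{p^n}^*$ does the rest.
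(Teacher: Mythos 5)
Your proof is correct and follows essentially the same route as the paper: take $n$ to be the multiplicative order of $p$ in $(\Z/k\Z)^*$ (which exists by the coprimality hypothesis), observe that $k \mid p^n - 1$, and then use the cyclicity of $\F_{p^n}^*$ to extract an element of order $k$ inside $\overline{\F_p}$. The only difference is that you spell out a couple of standard facts (cyclic groups have elements of every divisor order; $\F_{p^n}$ embeds in $\overline{\F_p}$) that the paper leaves implicit.
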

\begin{proof}
Since $k,p$ are co-prime, $p\in \Z_k^*$ which is the multiplicative group of invertible elements in $\Z_k$. Let $w \geq 1$ be the order of $p$ in the group $\Z_k^{*}$, so $k$ divides $p^w-1$. Consider the extension field $\F_{p^w}$, which is a sub field of the algebraic closure of $\F_p$. The multiplicative group $\F_{p^w}^*$ of this field is a cyclic group of size $p^w - 1$. Since $k$ divides this size, there must be an element in $\F_{p^w}$ of order $k$. 
\end{proof}

\section{Review of $O(n^{1/3})$ cost 2-server PIR}
\label{oldconstruction}
There are several known constructions of 2-server PIR with $O(n^{1/3})$ communication cost. We will recall here in detail a particular construction due to \cite{WoodruffY05} which uses polynomial interpolation using derivatives (over a field). In the next section we will replace the field with a ring and see how to use matching vector families to reduce the communication cost.

Let $\ba=(\vect{a}{n})$ be the database, choose $k$ to be smallest integer such that $n \le \binom{k}{3} $. Let $\F_q$ be a finite field with $q > 3$ elements and suppose for simplicity that $q$ is prime (so that partial derivatives behave nicely for polynomials of degree at most $3$). Let $\phi:[n]\mapsto \set{0,1}^k\subset \F_q^k$ be an embedding of the $n$ coordinates into points in $\set{0,1}^k$ of Hamming weight 3. Such an embedding exists since $n\le \binom{k}{3}$.

Define $F(\vect{x}{k})=F(\bx) \in\F_q[x_1,\cdots,x_k]$ as \[F(\bx)=\sum_{i=1}^n a_i\left( \prod_{j: \phi(i)_j=1}x_j\right)\] Note that $F(\bx)$ is a degree 3 polynomial satisfying $F(\phi(i))=a_i\ \forall\ i\in [n]$. Fix any two nonzero field elements $t_1 \neq t_2 \in \F_q \setminus \{0\}$. 

Suppose the user $\cU$ wants to recover the bit $a_\tau$. The protocol is as follows: The user picks a uniformly random element $\bz\in \F_q^k$ and sends $\phi(\tau)+t_1\bz$ to $\cS_1$ and $\phi(\tau)+t_2\bz$ to $\cS_2$. Each server $S_i$ then replies with the value of $F$ at the point received $F(\phi(\tau)+t_i\bz)$ as well as the values of the $k$ partial derivatives of $F$ at the same point $$\nabla F(\phi(\tau)+t_i\bz)=\left(\frac{\partial F}{\partial z_1}(\phi(\tau)+t_i\bz),\cdots,\frac{\partial F}{\partial z_k}(\phi(\tau)+t_i\bz)\right)$$ The partial derivatives here are defined in the same way as for polynomials over the real numbers.

\begin{empheq}[box=\fbox]{align*}
\cU &: \mathrm{Picks\ a\ uniformly\ random\ } \bz\in \F_q^k\\
\cU \rightarrow \cS_i &: \phi(\tau)+t_i\bz\\\
\cS_i \rightarrow \cU &: F( \phi(\tau)+t_i\bz), \nabla F( \phi(\tau)+t_i\bz)
\end{empheq}

The protocol is private since $\phi(\tau)+t\bz$ is uniformly distributed in $\F_q^k$ for any $\tau$ and $t\ne 0$. 

Consider the univariate polynomial $$g(t)=F(\phi(\tau)+t\bz).$$ Observe that, be the chain rule, $$g'(t)=\inpro{\nabla F(\phi(\tau)+t\bz)}{\bz}.$$ Thus the user can recover the values $g(t),g'(t)$ for $t=t_1,t_2$ from the server's responses. From this information the user needs to find $g(0)=F(\phi(\tau))=a_\tau$. Since $F$ is a degree 3 polynomial, $g(t)$ is a univariate degree 3 polynomial, let $g(t)=\sum_{\ell=0}^3 c_\ell t^\ell$. Therefore we have the following matrix equation:

\begin{align*}
\left[
\begin{matrix}
g(t_1)\\
g'(t_1)\\
g(t_2)\\
g'(t_2)\\
\end{matrix}
\right]
=
\left[
\begin{matrix}
1&t_1&t_1^2&t_1^3\\
0&1&2t_1&3t_1^2\\
1&t_2&t_2^2&t_2^3\\
0&1&2t_2&3t_2^2\\
\end{matrix}
\right]
\left[
\begin{matrix}
c_0\\
c_1\\
c_2\\
c_3
\end{matrix}
\right]
=M
\left[
\begin{matrix}
c_0\\
c_1\\
c_2\\
c_3
\end{matrix}
\right]
\end{align*}	

The matrix $M$ has determinant $det(M)=(t_2-t_1)^4$ and so M is invertible as long as $t_1 \neq t_2$. Thus the user can  find $c_0=g(0)=F(\phi(\tau)) = a_\tau$ by multiplying by the inverse of $M$.

The communication cost of this protocol is $O(k) = O(n^{1/3})$ since the user sends a vector in $\F_q^k$ to each server and each server sends an element in $\F_q$ and a vector in $\F_q^k$ to the user.

\section{The new 2-server scheme}
\label{newconstruction}

In this section we describe our main construction which proves Theorem \ref{mainthm}. Before describing the construction we set up some of the required ingredients and notations.

 The first ingredient is a matching vector family over $\Z_6$ as in Corollary~\ref{Grolmuszmod6}. That is, we construct an $S = \{1,3,4\}$- matching vector family $\cal F = (\cU,\cV)$ where $\cU=(\vect{\bu}{n}),\cV=(\vect{\bv}{n})$ have elements in $\Z_6^k$. Corollary~\ref{Grolmuszmod6} tells us that this can be done with $n=\exp(\Omega(\log^2 k/\log\log k))$ or  $k=\exp(\bigo{\sqrt{\log n\ \log\log n}})$.

We will work with polynomials over the ring $$\cR = \cR_{6,6}=\Z_6[\gamma]/(\gamma^6-1)$$ (see Section~\ref{preliminaries}).  We will denote the vector $(\gamma^{z_1},\gamma^{z_2},\cdots,\gamma^{z_k})$ by $\gamma^\bz$ where $\bz=(\vect{z}{k}) \in \Z_6^k$. We will need to extend the notion of partial derivatives to polynomials in $\cR[x_1,\ldots,x_k]$. This will be a non standard definition, but it will satisfy all the properties we will need. Instead of defining each partial derivative separately, we define one operator that will include all of them.

\begin{define}
	Let $\cR$ be a commutative ring and let $F(\bx)=\sum c_{\bz}\bx^{\bz} \in \cR[x_1,\ldots,x_k]$. We define $F^{(1)} \in (\cR^k)[x_1,\ldots,x_k]$ to be
	\begin{align*}
	F^{(1)}(\bx)&:=\sum (c_{\bz}\cdot \bz) \bx^{\bz}
	\end{align*}
\end{define}

For example, when $F(x_1,x_2)=x_1^2x_2+4x_1x_2+3x_2^2$ (with integer coefficients),
\[F^{(1)}(x_1,x_2)=\twovect{2}{1}x_1^2x_2+\twovect{4}{4}x_1x_2+\twovect{0}{6}x_2^2\]
%\[F^{(2)}(x_1,x_2)=\fourmatrix{4}{2}{2}{1}x_1^2x_2+4\fourmatrix{1}{1}{1}{1}x_1x_2+3\fourmatrix{0}{0}{0}{4}x_2^2\]
%We can think of $F^{(1)}\in \cR^k[\bx]$ and $F^{(2)}\in \cR^{k\times k}[\bx]$ as modified derivatives of $F$.
One can think of $F^{(1)}$ both as a polynomial with coefficients in $\cR^k$ as well as a $k$-tuple of polynomials in $\cR[x_1,\ldots,x_k]$. This will not matter much since the only operation we will perform on $F^{(1)}$ is to evaluate it at a point in $\cR^k$.

\paragraph{The Protocol:}

Let $\ba=(a_1,a_2\cdots,a_n)\in \BF^n$ be an n-bit database shared by two servers $\cS_1$ and $\cS_2$. The user $\cU$ wants to find the bit $a_\tau$ without revealing any information about $\tau$ to either server. For the rest of this section, $\cR = \cR_{6,6} = \Z_6[\gamma]/(\gamma^6-1)$. The servers represent the database as a polynomial $F(\bx)\in \mathcal{R}[\bx]=\mathcal{R}[x_1,\cdots,x_k]$  given by
\[F(\bx)=F(x_1,\cdots,x_k)=\sum_{i=1}^n a_i \bx^{\bu_i},\] where $\cU = (\bu_1,\ldots,\bu_n)$ are given by the matching vector family $\cF = (\cU,\cV)$.

The user samples a uniformly random $\bz\in \Z_6^k$ and  then sends  $\bz+t_1\bv_\tau$ to $\cS_1$ and $\bz+t_2\bv_\tau$ to $\cS_2$ where we fix $t_1=0$ and $t_2=1$ (other choices of values would also work). $\cS_i$ then responds with the value of $F$ at the  point $\bgam^{\bz+t_i\bv_\tau}$, that is with $F(\bgam^{\bz+t_i\bv_\tau})$ and the value of the `first order derivative' at the same point $F^{(1)}(\bgam^{\bz+t_i\bv_\tau})$. Notice that the protocol is private since $\bz+t\bv_\tau$ is uniformly distributed over $\Z_6^k$ for any fixed $\tau$ and $t$.

\begin{empheq}[box=\fbox]{align*}
\cU &: \mathrm{Picks\ a\ uniformly\ random\ } \bz\in \Z_6^k\\
\cU\rightarrow \cS_i &: \bz+t_i\bv_\tau\\
\cS_i \rightarrow \cU &: F(\bgam^{\bz+t_i\bv_\tau}), F^{(1)}(\bgam^{\bz+t_i\bv_\tau})
\end{empheq}

\paragraph{Recovery:}
Define 
\[G(t):=F(\bgam^{\bz+t\bv_\tau})
=\sum_{i=1}^n a_i \gamma^{\inpro{\bz}{\bu_i}+t\inpro{\bv_\tau}{\bu_i}}\]
Using the fact that $\gamma^6=1$, we can rewrite $G(t)$ as:
\[G(t)=\sum_{\ell=0}^{5} c_\ell \cdot \gamma^{t\ell},\]
with each $c_\ell \in \cR$ given by $$c_\ell=\sum_{i:\inpro{\bu_i}{\bv_\tau}
=\ell\mod 6}a_i \gamma^{\inpro{\bz}{\bu_i}}.$$ Since
\[\inpro{\bu_i}{\bv_\tau}\mod 6 \,\,  \begin{cases}
= 0   &\mbox{if}\ i=\tau\\
\in  S=\set{1,3,4} &\mbox{if}\ i\ne \tau
\end{cases}
\]
we can conclude that $c_0=a_\tau\gamma^{\inpro{\bu_\tau}{\bz}}$ and $c_2=c_5=0$. Therefore $$G(t)=c_0+c_1\gamma^t+c_3\gamma^{3t}+c_4\gamma^{4t}.$$
Next, consider the polynomial $$g(T) = c_0+c_1T+c_3T^3+c_4T^4\in \mathcal{R}[T].$$ 
By definition we have
\begin{align*}
 g(\gamma^t)=G(t)=F(\bgam^{\bz+t\bv_\tau})\\
 g^{(1)}(\gamma^t)=\sum_{\ell=0}^5 \ell c_\ell \gamma^{t\ell}=\inpro{F^{(1)}(\bgam^{\bz+t\bv_\tau})}{\bv_\tau},
\end{align*}
where the last equality holds since $c_2=c_5=0$ and 
\begin{align*}
\inpro{F^{(1)}(\bgam^{\bz+t\bv_\tau})}{\bv_\tau}&=\left\langle\sum_{i=1}^n a_i \bu_i\gamma^{\inpro{\bz}{\bu_i}+t\inpro{\bv_\tau}{\bu_i}},\bv_\tau\right\rangle\\
&= \sum_{i=1}^n a_i\inpro{\bu_i}{\bv_\tau} \gamma^{\inpro{\bz}{\bu_i}+t\inpro{\bv_\tau}{\bu_i}}\\
&= \sum_{\ell=0}^{5} \ell \left(\sum_{i:\inpro{\bu_i}{\bv_\tau}
=\ell \mod 6}a_i \gamma^{\inpro{\bz}{\bu_i}}\right)\gamma^{t\ell}=\sum_{\ell=0}^5 \ell c_\ell \gamma^{t\ell}
\end{align*}
%  &\gamma^{2t} g''(\gamma^t)=\sum_{\ell=0}^{5}\ell(\ell-1)c_\ell\gamma^{t\ell}=\inpro{F^{(2)}(\bgam^{\bz+t\bv_\tau})}{\bv_\tau\otimes \bv_\tau}-\inpro{F^{(1)}(\bgam^{\bz+t\bv_\tau})}{\bv_\tau}
%\end{align*}
So the user can find the values of $g(\gamma^t),g^{(1)}(\gamma^t)$ for $t=t_1,t_2$. Since $t_1=0,t_2=1$, we obtain the following matrix equation:

\begin{align*}
\left[
\begin{matrix}
g(1)\\
g^{(1)}(1)\\
g(\gamma)\\
g^{(1)}(\gamma)\\
\end{matrix}
\right]
=
\left[
\begin{matrix}
1&1&1&1\\
0&1&3&4\\
1 &\gamma  &\gamma^3& \gamma^4 \\
0 &\gamma &3\gamma^3 &4\gamma^4
\end{matrix}
\right]
\left[
\begin{matrix}
c_0\\
c_1\\
c_3\\
c_4
\end{matrix}
\right]
=M
\left[
\begin{matrix}
c_0\\
c_1\\
c_3\\
c_4
\end{matrix}
\right]
\end{align*}	
The determinant (over $\cR$) of the matrix $M$ is
\begin{equation}
\label{determinant}
\det(M)=\gamma(\gamma-1)^4(\gamma^2+4\gamma+1)=3\gamma^5+4\gamma^4+3\gamma^3+2\gamma
\end{equation}
 and so, by Lemma~\ref{nonzerolemma}, is a non-zero element of the ring $\cR$. Since  $c_0=a_\tau\gamma^{\inpro{\bu_\tau}{\bz}}$, either $c_0=0$ or $c_0=\gamma^{\inpro{\bu_\tau}{\bz}}$ which is not a zero-divisor by remark \ref{nonzerodivisorgammapower}. Hence, by Remark~\ref{remark-adjugate}, the user can find whether $c_0=0$ from the vector $[g(1),g^{(1)}(1),g(\gamma),g^{(1)}(\gamma)]^t$ by multiplying it from the left by $\adj(M)$. Since $c_0=a_\tau\gamma^{\inpro{\bu_\tau}{\bz}}$, $a_\tau$ will be zero iff $c_0$ is and so the user can recover $a_\tau \in \{0,1\}$.

%\begin{remark}
%\label{homomorphismremark}
%We can extend the trivial identity ring homomorphism between $\Z_6$ and $\Z_6$ to a ring homomorphism $\tau:\mathcal{R}=\Z_6[\gamma]\mapsto \Z_6$ by defining $\tau(\gamma)=-1$ and extending it (observe that $(-1)^6=1$).  Also $\tau(\det(M))=2\ne 0$. 
%Therefore all of the above equations (and inequations) still hold by taking $\gamma=-1$, so the user can choose $\gamma=-1$ in the beginning of the protocol and work with $\cR=\Z_6[-1]\cong \Z_6$f
%\end{remark}

\paragraph{Communication Cost:}
The user sends a vector in $\Z_6^k$ to each server. Each server sends a element of $\cR$ and a vector in $\cR^k$ to the user. Since elements of $\cR$ have constant size description, the total communication cost is $O(k)=n^{o(1)}$.

\subsection{Working over $\Z_6$ or $\F_3$}\label{sec-overZm}
%Using the extended ring $\cR = \Z_6[\gamma]/(\gamma^6 - 1)$ in the above construction   makes the presentation  clearer but is not absolutely necessary. One can replace it with the ring $\Z_6$ by observing that we did not really use the fact that $\gamma^t \neq 1$ for all $t < 6$ (apart from the final determinant calculation). Thus, we could replace $\gamma$ with any element $\gamma \in   \Z_m$ that satisfies $\gamma^6=1$, as long as we get a nonzero determinant at the end of the protocol. For example, taking $\gamma=-1$ and repeating the above analysis over $\Z_6$ we get 
%
%\[
%M=\left(\begin{matrix}
%1&1&1&1\\
%0&1&3&4\\
%1&-1&-1&1\\
%0&-1&-3&4
%\end{matrix}\right),
%\adj(M)=
%\left(\begin{matrix}
%4&2&4&2\\
%0&4&0&2\\
%4&2&2&4\\
%0&4&0&4
%\end{matrix}\right),
%\det(M)=2
%\]
%\[a_\tau=0\iff 4g(1)+2g^{(1)}(1)+4g(-1)+2g^{(1)}(-1)=0\]

Using the  ring $\cR_{6,6}=\Z_6[\gamma]/(\gamma^6 - 1)$ in the above construction makes the presentation  clearer but is not absolutely necessary. Observing the proof, we see that one can replace it with any ring $\cR$ as long as  there is a homomorphism from $\cR_{6,6}$ to $\cR$ such that the determinant of the matrix $M$ (Eq. \ref{determinant}) doesn't vanish under this homomorphism. 

For example, we can work over the ring $\Z_6$ and use the element $-1$ as a substitute for $\gamma$. Since $(-1)^6 = 1$ all of the calculations we did with $\gamma$ carry through. In addition, the resulting determinant of $M$ is non zero when setting $\gamma= -1$ and so we can complete the recovery process. More formally, define the homomorphism $\tau:\Z_6[\gamma]/(\gamma^6 - 1)\mapsto \Z_6$ by extending the identity homomorphsim on $\Z_6$ using $\tau(\gamma)=-1$. Observe that the determinant of the matrix $M$ in Eq. (\ref{determinant}) doesn't vanish under this homomorphism, $\tau(\det(M))=-4=2$. 

A more interesting example is the ring of integers modulo $3$, which we denote by $\F_3$ to highlight that it is also a field. We can  use the homomorphsim $\phi: \Z_6[\gamma]/(\gamma^6-1)\mapsto \F_3$ by extending the natural homomorphsim from $\Z_6$ to $\F_3$ (given by reducing each element modulo $3$) using $\phi(\gamma)=-1$. Again the determinant in Eq. (\ref{determinant}) doesn't vanish. This also shows that our scheme can be made to be 	{\em bilinear}, as defined in \cite{RazborovY06}, since the answers of each server become linear combinations of database entries over a field.

\subsection{An Alternative Construction}

In the construction above we used the special properties of Grolmusz's construction, namely that the non-zero inner products are in the special set $S = \set{1,3,4}$. Here we show how to make the construction work with any matching vector family (over $\Z_6$). This construction also introduces higher order differential operators, which could be of use if one is to generalize this work further.

Suppose we run our protocol (with $\cR = \cR_{6,6}$) using a matching vector family with $S=\Z_6\setminus\set{0}$. Then, we cannot claim that $c_2=c_5=0$, but we still have $c_0=a_\tau\gamma^{\inpro{\bu_\tau}{\bz}}$. We can proceed by asking for the `second order' derivative of $F(\bx)=\sum_{i=0}^n a_i\bx^{\bu_i}$ which we define as
\[F^{(2)}(\bx):=\sum c_{\bz}\ (\bz\otimes\bz)\ \bx^{\bz}\]
where $\bz\otimes\bz$ is the $k\times k$ matrix defined by $(\bz\otimes\bz)_{ij}=z_iz_j$. For example, when $P(x_1,x_2)=x_1^2x_2+4x_1x_2+3x_2^2$,
\[P^{(2)}(x_1,x_2)=\fourmatrix{4}{2}{2}{1}x_1^2x_2+4\fourmatrix{1}{1}{1}{1}x_1x_2+3\fourmatrix{0}{0}{0}{4}x_2^2.\]

The final protocol is:
\begin{empheq}[box=\fbox]{align*}
\cU &: \mathrm{Picks\ a\ uniformly\ random\ } \bz\in \Z_m^k\\
\cU\rightarrow \cS_i &: \bz+t_i\bv_\tau\\
\cS_i \rightarrow \cU &: F(\bgam^{\bz+t_i\bv_\tau}), F^{(1)}(\bgam^{\bz+t_i\bv_\tau}),F^{(2)}(\bgam^{\bz+t_i\bv_\tau})
\end{empheq}
Notice that privacy is maintained and the communication is $O(k^2) = n^{o(1)}$ as before.

For recovery, define $g(T)\in \cR[T]$ as before and notice that, in addition to the identities
\begin{align*}
& g(\gamma^t)=\sum_{\ell=0}^5 c_\ell \gamma^{t\ell}=F(\bgam^{\bz+t\bv_\tau})\\
 & g^{(1)}(\gamma^t)=\sum_{\ell=0}^5 \ell c_\ell \gamma^{t\ell}=\inpro{F^{(1)}(\bgam^{\bz+t\bv_\tau})}{\bv_\tau},
\end{align*}
we also get the second order derivative of $g$ from 
$$g^{(2)}(\gamma^t)=\sum_{\ell=0}^5 \ell^2 c_\ell \gamma^{t\ell}=\inpro{F^{(2)}(\bgam^{\bz+t\bv_\tau})}{\bv_\tau\otimes\bv_\tau},$$ where the inner product of matrices is taken entry-wise and using the identity $\inpro{ { \bf u} \otimes { \bf u}}{ \bv \otimes \bv} = \inpro{ { \bf u}}{\bv}^2$.

By choosing $t_1=0,t_2=1$, we have the following matrix equation:

\begin{align*}
\left[
\begin{matrix}
g(1)\\
g^{(1)}(1)\\
g^{(2)}(1)\\
g(\gamma)\\
g^{(1)}(\gamma)\\
g^{(2)}(\gamma)
\end{matrix}
\right]
=
\left[
\begin{matrix}
1&1&1&1&1&1\\
0&1&2&3&4&5\\
0&1&4&9&16&25\\
1 &\gamma &\gamma^2 &\gamma^3& \gamma^4 &\gamma^5\\
0& \gamma &2\gamma^2 &3\gamma^3 &4\gamma^4& 5\gamma^5\\
0& \gamma &4 \gamma^2 &9\gamma^3 &16\gamma^4 &25\gamma^5\\
\end{matrix}
\right]
\left[
\begin{matrix}
c_0\\
c_1\\
c_2\\
c_3\\
c_4\\
c_5
\end{matrix}
\right]
=M
\left[
\begin{matrix}
c_0\\
c_1\\
c_2\\
c_3\\
c_4\\
c_5
\end{matrix}
\right]
\end{align*}
$\det(M)=4\gamma^3(\gamma-1)^9=4+2\gamma^3\ne 0$ and so we can use recover $a_\tau$ as before.

\section{Generalization to more servers}\label{sec-kserver}

In this section we prove Theorem~\ref{THM-kserver}. As was mentioned in the introduction, we will allow the database symbols to belong to a slightly larger alphabet $\Z_m$.

Let $q=2^{r-1}$ denote the number of servers $\cS_1,\cdots,\cS_q$ for some $r\ge 2$. Let $m=p_1p_2\cdots p_r$ where $p_1,p_2,\cdots,p_r$ are distinct primes. By theorem \ref{Grolmusz}, there is an explicit $S$-matching vector family $\cF=(\cU,\cV)$ of size $n$ and dimension $k=n^{\bigo{(\log\log n/\log n)^{1-1/r}}}$ where $S=\set{a\in \Z_m: a\mod p_i \in \set{0,1}\  \forall\ i \in[r]}\setminus \set{0}$. By remark \ref{CRT}, $|S\cup \set{0}|=2^r=2q$.

\paragraph{The Protocol:}

We will work over the ring $\cR=\cR_{m,m}=\Z_m[\gamma]/(\gamma^m-1)$. The servers represent the database $\ba=(\vect{a}{n})\in \Z_m^n$ as a polynomial $F(\bx)\in \mathcal{R}[\bx]=\mathcal{R}[x_1,\cdots,x_k]$  given by
\[F(\bx)=F(x_1,\cdots,x_k)=\sum_{i=1}^n a_i \bx^{\bu_i},\] where $\cU = (\bu_1,\ldots,\bu_n)$ are given by the matching vector family $\cF = (\cU,\cV)$.

The user samples a uniformly random $\bz\in \Z_m^k$ and  then sends  $\bz+t_i\bv_\tau$ to $\cS_i$  for $i\in [q]$ where $t_i=i-1$. $\cS_i$ then responds with the value of $F$ at the  point $\bgam^{\bz+t_i\bv_\tau}$, that is with $F(\bgam^{\bz+t_i\bv_\tau})$ and the value of the `first order derivative' at the same point $F^{(1)}(\bgam^{\bz+t_i\bv_\tau})$. Notice that the protocol is private since $\bz+t\bv_\tau$ is uniformly distributed over $\Z_m^k$ for any fixed $\tau$ and $t$.

\begin{empheq}[box=\fbox]{align*}
\cU &: \mathrm{Picks\ a\ uniformly\ random\ } \bz\in \Z_m^k\\
\cU\rightarrow \cS_i &: \bz+t_i\bv_\tau\\
\cS_i \rightarrow \cU &: F(\bgam^{\bz+t_i\bv_\tau}), F^{(1)}(\bgam^{\bz+t_i\bv_\tau})
\end{empheq}

\paragraph{Recovery:}
Similarly to the 2-server analysis, we define 
\[G(t):=F(\bgam^{\bz+t\bv_\tau})
=\sum_{i=1}^n a_i \gamma^{\inpro{\bz}{\bu_i}+t\inpro{\bv_\tau}{\bu_i}}=c_0+\sum_{\ell\in S}c_\ell\gamma^{t\ell},\]
and
$$g(T) = c_0+\sum_{\ell\in S}c_\ell T^\ell \in \mathcal{R}[T],$$ 
so that  $c_0=a_\tau\gamma^{\inpro{\bu_\tau}{\bz}}$ and
\begin{align*}
 g(\gamma^t)=G(t)=F(\bgam^{\bz+t\bv_\tau})\\
 g^{(1)}(\gamma^t)=\sum_{\ell=0}^{m-1} \ell c_\ell \gamma^{t\ell}=\inpro{F^{(1)}(\bgam^{\bz+t\bv_\tau})}{\bv_\tau},
\end{align*}
Hence, the user can calculate the values of $g(\gamma^t),g^{(1)}(\gamma^t)$ for $t=t_1,\cdots,t_q$ and we end up with the following (square) system of equations:

\begin{align*}
\left[
\begin{matrix}
g(\gamma^{t_1})\\
g^{(1)}(\gamma^{t_1})\\
\vdots\\
g(\gamma^{t_q})\\
g^{(1)}(\gamma^{t_q})\\
\end{matrix}
\right]
=
\left[
\begin{matrix}
1&\cdots&\gamma^{t_1\ell}&\cdots\\
0&\cdots&\ell\gamma^{t_1\ell}&\cdots\\
\vdots& &\vdots& &\vdots\\
1&\cdots&\gamma^{t_q\ell}&\cdots\\
0&\cdots&\ell\gamma^{t_q\ell}&\cdots\\
\end{matrix}
\right]
\left[
\begin{matrix}
c_0\\
\vdots\\
c_\ell\\
\vdots
\end{matrix}
\right]
=M
\left[
\begin{matrix}
c_0\\
\vdots\\
c_\ell\\
\vdots
\end{matrix}
\right]
\end{align*}	
where the $2^r = 2q$ columns are indexed by $\ell\in \set{0}\cup S$. Instead of computing the determinant (and the Adjugate matrix), we will use the following Lemma (proven below).

\begin{lem}\label{lem-lambda}
There exists a row vector $$\blam=[\alpha_1,\beta_1, \cdots ,\alpha_q, \beta_q]\in \cR^{2q}$$ such that $\blam M=[\mu,0,\cdots,0]$ for some $\mu\in \cR$ where $\mu \mod p_i \ne 0\ \forall i\in[r]$.
\end{lem}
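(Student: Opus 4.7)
The plan is to decompose via CRT: since $m = p_1\cdots p_r$ with distinct primes, $\cR \cong \prod_{i=1}^r \cR_i$ where $\cR_i := \F_{p_i}[\gamma]/(\gamma^m-1)$, so it suffices to construct, for each $i$, a vector $\blam^{(i)}\in \cR_i^{2q}$ satisfying $\blam^{(i)} M = [\mu^{(i)},0,\ldots,0]$ with $\mu^{(i)} \ne 0$ in $\cR_i$, and then lift via CRT. Fix a prime $p=p_i$. The key simplification is to set $\beta_j = -\alpha_j$, which collapses the $\ell$-th vanishing condition (for $\ell\in S$) to $(1-\ell)\,A(\gamma^\ell)=0$ in $\cR_i$, where $A(Y):=\sum_{j=1}^q \alpha_j Y^{j-1}$. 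The matching vector hypothesis forces $\ell \bmod p \in \{0,1\}$ for every $\ell \in S$, so the constraints with $\ell \equiv 1 \pmod p$ vanish automatically, and only those indexed by $S_0:=\{\ell \in S : \ell \equiv 0 \pmod p\}$ survive, namely $A(\gamma^\ell)=0$ in $\cR_i$ for each $\ell \in S_0$.

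By the CRT bijection $S\cup\{0\}\leftrightarrow \{0,1\}^r$ of Remark~\ref{CRT}, $|S_0|=2^{r-1}-1=q-1$, matching exactly the number of free coefficients in a polynomial of degree $q-1$. The natural choice is
\[ A(Y) := \prod_{\ell \in S_0}(Y-\gamma^\ell) \in \cR_i[Y], \]
whose $q$ coefficients furnish $\alpha_1^{(i)},\ldots,\alpha_q^{(i)}\in\cR_i$; by construction all surviving constraints hold, and the first coordinate of $\blam^{(i)} M$ equals $\mu^{(i)}=A(1)=\prod_{\ell \in S_0}(1-\gamma^\ell)$.

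The main obstacle is verifying that $\mu^{(i)}\ne 0$ in the ring $\cR_i$, which has zero divisors. Writing $m = p\,m'$ with $\gcd(p,m')=1$, every $\ell \in S_0$ has the form $\ell=p\ell'$ with $1\le \ell' \le m'-1$, and the Frobenius identity gives $1-\gamma^{p\ell'}=(1-\gamma^{\ell'})^p$ in $\F_p[\gamma]$. Hence
\[ \mu^{(i)} = B(\gamma)^p, \qquad B(\gamma) := \prod_{\ell \in S_0}\bigl(1-\gamma^{\ell/p}\bigr). \]
Since $\gamma^m-1 = (\gamma^{m'}-1)^p$ in $\F_p[\gamma]$, unique factorization in $\F_p[\gamma]$ yields $\mu^{(i)} = 0$ in $\cR_i$ iff $(\gamma^{m'}-1) \mid B(\gamma)$. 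As $\gamma^{m'}-1$ is separable over $\F_p$ (because $\gcd(p,m')=1$), this in particular would force the cyclotomic factor $\Phi_{m'}(\gamma)$ to divide $B(\gamma)$: every primitive $m'$-th root of unity $\zeta \in \overline{\F_p}$ would annihilate some factor $1-\gamma^{\ell/p}$, i.e.\ $\zeta^{\ell/p}=1$, forcing $m' \mid \ell/p$ --- impossible since $1\le \ell/p \le m'-1$. Hence $\mu^{(i)}\ne 0$.

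Finally, I would CRT-combine the tuples $(\alpha_j^{(i)})_{i=1}^r$ to obtain $\alpha_j\in \cR$, set $\beta_j:=-\alpha_j$, and assemble $\blam=[\alpha_1,\beta_1,\ldots,\alpha_q,\beta_q]\in\cR^{2q}$. Then $\mu:=A(1)=\sum_j\alpha_j$ satisfies $\mu\bmod p_i = \mu^{(i)}\ne 0$ for every $i$, as required.
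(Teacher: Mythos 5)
Your construction is identical to the paper's: CRT-decompose $\cR\cong\prod_i\cR_{p_i,m}$, set $\beta_j=-\alpha_j$ so the $\ell$-th coordinate of $\blam M$ collapses to $(1-\ell)A(\gamma^\ell)$, note that the constraints at $\ell\equiv 1\pmod{p_i}$ then hold automatically, and take $A$ (the paper's $f_i$) to be $\prod_{\ell\in S_0}(Y-\gamma^\ell)$, whose degree $q-1$ exactly uses the $q$ available coefficients. The one place you diverge is the nonvanishing of $\mu^{(i)}=\prod_{\ell\in S_0}(1-\gamma^\ell)$ in $\F_{p_i}[\gamma]/(\gamma^m-1)$: you pull out $p_i$-th powers via Frobenius, writing $\mu^{(i)}=B(\gamma)^{p_i}$ and $\gamma^m-1=(\gamma^{m'}-1)^{p_i}$, and reduce via unique factorization in $\F_{p_i}[\gamma]$ to the assertion $(\gamma^{m'}-1)\nmid B(\gamma)$, which you verify at a primitive $m'$-th root of unity $\zeta$. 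The paper instead substitutes $\zeta$ directly into the polynomial identity $\prod_{\ell\in S_0}(1-x^\ell)=(x^m-1)\theta(x)$ and notes that $\zeta^\ell\ne 1$ since no $\ell\in S_0$ is divisible by both $p_i$ and $m'$. Both arguments rest on the same fact, the existence of a primitive $m'$-th root of unity over $\F_{p_i}$ (Lemma~\ref{lem-order}); your Frobenius detour is a clean reformulation rather than a genuinely different method, and it is correct.
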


Using this Lemma, the user can recover  $a_\tau$ as follows. We have
\begin{align*}
\nu :=
\blam
\left[
\begin{matrix}
g(\gamma^{t_1})\\
g^{(1)}(\gamma^{t_1})\\
\vdots\\
g(\gamma^{t_q})\\
g^{(1)}(\gamma^{t_q})\\
\end{matrix}
\right]
=
\blam M
\left[
\begin{matrix}
c_0\\
\vdots\\
c_\ell\\
\vdots
\end{matrix}
\right]
=[\mu,0,\cdots,0]
\left[
\begin{matrix}
c_0\\
\vdots\\
c_\ell\\
\vdots
\end{matrix}
\right]
=\mu c_0
\end{align*}	
%By remark \ref{nonzerodivisorgammapower}, $c_0=a_\tau\gamma^{\inpro{\bu_\tau}{\bz}}$ is $0$ or a non-zero divisor. Since $\mu\ne 0$, $\mu c_0=0\iff a_\tau=0$, thus we can find $a_\tau\in \set{0,1}$.
Taking this equation modulo $p_i$ we get, \[(\nu \mod p_i)=(\mu c_0 \mod p_i) = (\mu \mod p_i)(a_\tau \mod p_i) \gamma^{\inpro{\bu_\tau}{\bz}}\]
Let $\mu=\sum_{j=0}^{m-1}\mu_j\gamma^j$ and  $\nu=\sum_{j=0}^{m-1}\nu_j\gamma^j$. Since $\mu \mod p_i \ne 0$, there exists $j$ such that $\mu_j \mod p_i \ne 0$. So $(a_\tau\mod p_i)=(\mu_j \mod p_i)^{-1}(\nu_{j+\inpro{\bu_\tau}{\bz}} \mod p_i)$. So we can find $a_\tau \mod p_i$ for each $i\in[r]$. Finally we use Chinese Remainder Theorem to find $a_\tau \in \Z_m$.

\subsection{Proof of Lemma~\ref{lem-lambda}}
For any $\blam=[\alpha_1,\beta_1, \cdots ,\alpha_q, \beta_q]\in \cR^{2q}$ we can define a function $h:S\cup\set{0}\mapsto \cR$ as:
\[h(\ell) =(\blam M)_\ell = \left(\sum_{i=1}^q \alpha_i \gamma^{t_i\ell} \right)+ \ell \left(\sum_{i=1}^q \beta_i \gamma^{t_i \ell}\right).\]
Our goal is then to construct an $h$ of this form  such that 
\begin{align*}
h(\ell)
\begin{cases}
= 0 &\mbox{if}\ \ell\in S\\
= \mu & \mbox{if}\ \ell=0
\end{cases}
\end{align*}
where $(\mu \mod p_i) \ne 0\ \forall i\in[r]$.

Notice that, by Chinese Remaindering,
\begin{equation}\label{eq-isomorphism}
\cR = \cR_{m,m} \cong \cR_{p_1,m} \times \ldots \times \cR_{p_r,m},	
\end{equation} 
where we recall that $\cR_{p_i,m} = \Z_{p_i}[\gamma]/(\gamma^m-1)$. Therefore, we also get that, for a formal variable $x$, the rings of univariate polynomials also satisfy
$$ \cR[x] \cong \cR_{p_1,m}[x] \times \ldots \times \cR_{p_r,m}[x]. $$
In other words, any family of polynomials $f_i \in \cR_{p_i,m}[x]$, $i\in [r]$ can be `lifted' to a single polynomial $f \in \cR[x]$ so that $ (f \mod p_i) = f_i$ for all $i$ (reducing $f$ mod $p_i$ is done coordinate-wise). Moreover, since this lift is done coefficient-wise (using Eq.\ref{eq-isomorphism}), we get that the degree of $f$ is equal to the maximum of the degrees of the $f_i$'s. 

We begin by constructing, for each $i \in [r]$ the following polynomial $f_i(x)\in \cR_{p_i,m}[x]$:
\[f_i(x)=\prod_{\ell\in S,\ \ell=0\mod p_i}(x-\gamma^\ell)\]
The degree of $f_i$ is $2^{r-1}-1=q-1$ so, by the above comment, we can find a  polynomial $f(x)\in \cR[x]$ of degree $q-1$ such that $f(x)\equiv f_i(x) \mod p_i$ for all $i\in [r]$. Define $\alpha_i, i\in[q]$ to be the coefficients of the polynomial $f$ so that $f(x)=\sum_{i=1}^q \alpha_{i}x^{i-1}$. Since we defined $t_i=i-1$, we have $f(x)=\sum_{i=1}^q \alpha_{i}x^{t_i}$. Define $\beta_i=-\alpha_i$ for all $i\in [q]$. Our final construction of $h$ is thus
\[h(\ell)=f(\gamma^\ell)-\ell f(\gamma^\ell)\]

\begin{claim}
$h(\ell)=0\ \forall \ell\in S$
\end{claim}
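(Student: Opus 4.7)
The plan is to unpack the definitions and reduce to a case analysis via the Chinese Remainder Theorem. First, substituting $\beta_i = -\alpha_i$ into the defining formula for $h$, and recalling that $f(x) = \sum_{i=1}^q \alpha_i x^{t_i}$, we immediately get the compact expression
\[ h(\ell) = \sum_{i=1}^q \alpha_i \gamma^{t_i \ell} - \ell \sum_{i=1}^q \alpha_i \gamma^{t_i \ell} = (1-\ell)\, f(\gamma^\ell). \]
So the task reduces to showing $(1-\ell) f(\gamma^\ell) = 0$ in $\cR = \cR_{m,m}$ for every $\ell \in S$.

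Next, I would invoke the ring isomorphism $\cR \cong \cR_{p_1,m} \times \cdots \times \cR_{p_r,m}$ from Eq.~(\ref{eq-isomorphism}). Under this isomorphism an element is zero iff its image is zero in each factor, so it suffices to show $(1-\ell) f(\gamma^\ell) \equiv 0 \pmod{p_i}$ for every $i \in [r]$. Since the lift was chosen so that $f \equiv f_i \pmod{p_i}$, this is the same as showing $(1-\ell) f_i(\gamma^\ell) = 0$ in $\cR_{p_i,m}$.

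Now I would use the crucial defining property of $S$: for every $\ell \in S$ and every $i \in [r]$ we have $\ell \bmod p_i \in \{0,1\}$. This gives two cases:
\begin{itemize}
\item If $\ell \equiv 0 \pmod{p_i}$, then $\ell$ is one of the indices appearing in the product $f_i(x) = \prod_{\ell' \in S,\, \ell' \equiv 0 \bmod p_i}(x - \gamma^{\ell'})$, so $\gamma^\ell$ is a root of $f_i$ and $f_i(\gamma^\ell) = 0$.
\item If $\ell \equiv 1 \pmod{p_i}$, then the scalar factor $(1-\ell)$ vanishes mod $p_i$, so $(1-\ell)f_i(\gamma^\ell) \equiv 0 \pmod{p_i}$.
\end{itemize}
Either way the required congruence holds, and combining over all $i$ via CRT gives $h(\ell) = 0$ in $\cR$.

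There is no real obstacle here: the cleverness is all in the construction of $f$ and the choice $\beta_i = -\alpha_i$, which was engineered precisely so that the two allowed residues of $\ell$ modulo each $p_i$ are handled by the two different pieces of $h$ (the first term kills $\ell \equiv 0$ via the roots of $f_i$, the second term's coefficient $(1-\ell)$ kills $\ell \equiv 1$). The proof is thus a short verification once the two-case structure matching $\{0,1\}$ is recognized.
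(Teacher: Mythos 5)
Your proof is correct and follows essentially the same argument as the paper: reduce modulo each prime $p_i$ via the CRT decomposition of $\cR$, then split into the two cases $\ell \equiv 0$ and $\ell \equiv 1 \pmod{p_i}$ afforded by the definition of $S$, observing that the first is killed by $f_i$ having $\gamma^\ell$ as a root and the second by the coefficient $1-\ell$. Your factored form $h(\ell) = (1-\ell)f(\gamma^\ell)$ is just a tidier way of writing the paper's $f(\gamma^\ell) - \ell f(\gamma^\ell)$.
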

\begin{proof}
Since $0\notin S$, $\ell\ne 0$. We will look at $h(\ell)$ modulo each of the primes.
\begin{align*}
h(\ell) \mod p_i = f_i(\gamma^\ell)-(\ell\mod p_i) f_i(\gamma^\ell)=
\begin{cases}
f_i(\gamma^\ell)= 0 & \mbox{if}\ \ell=0 \mod p_i\\
f_i(\gamma^\ell)-f_i(\gamma^\ell) =0 & \mbox{if}\ \ell=1 \mod p_i
\end{cases}
\end{align*}
Therefore, using Chinese Remaindering, $h(\ell)=0\ \forall \ell\in S$.
\end{proof}
\begin{claim}
 $(h(0) \mod p_j)\ne 0$ for all $j\in [r]$
\end{claim}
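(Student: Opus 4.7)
The plan is to compute $h(0)$ directly from the formula $h(\ell) = f(\gamma^\ell) - \ell f(\gamma^\ell)$, which gives $h(0) = f(1) - 0 = f(1)$. Since $f \equiv f_j \pmod{p_j}$ by construction, the claim reduces to showing that
$$f_j(1) = \prod_{\substack{\ell \in S \\ \ell \equiv 0 \,(\mathrm{mod}\, p_j)}} (1 - \gamma^\ell)$$
is nonzero as an element of $\cR_{p_j,m} = \Z_{p_j}[\gamma]/(\gamma^m-1)$. Note this product lives in a ring with zero divisors, so non-vanishing cannot be read off term-by-term and requires an additional argument.

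The strategy I would use is to construct an evaluation homomorphism $\psi : \cR_{p_j,m} \to K$, where $K$ is the algebraic closure of $\F_{p_j}$, and verify $\psi(f_j(1)) \ne 0$ in the field $K$. Write $m = p_j \cdot m'$ with $m' = \prod_{i \ne j} p_i$. Since $\gcd(m', p_j) = 1$, Lemma~\ref{lem-order} guarantees an element $\omega \in K$ of order exactly $m'$. Because $m' \mid m$ we have $\omega^m = 1$, so sending $\gamma \mapsto \omega$ defines a valid ring homomorphism $\psi : \cR_{p_j,m} \to K$.

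It then suffices to show that every factor $1 - \omega^\ell$ is nonzero in $K$, i.e., that $\omega^\ell \ne 1$ for each relevant $\ell$. Since $\omega$ has order $m'$, the equation $\omega^\ell = 1$ is equivalent to $m' \mid \ell$. But we already have $p_j \mid \ell$ and $\gcd(p_j, m') = 1$, so $m' \mid \ell$ would imply $m = p_j m' \mid \ell$, forcing $\ell = 0$ in the range $\{0,1,\ldots,m-1\}$; this contradicts $\ell \in S$ (which excludes $0$). Hence each factor lies in $K^{\ast}$, the product is a nonzero field element, and therefore $f_j(1) \ne 0$ in $\cR_{p_j,m}$, giving $(h(0) \bmod p_j) \ne 0$ as required. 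The only step with any subtlety is justifying that the evaluation map is well-defined; after that the divisibility argument is routine.
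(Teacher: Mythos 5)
Your proof is correct and follows essentially the same route as the paper: reduce to showing $f_j(1) = \prod_{\ell \in S,\ \ell \equiv 0 \bmod p_j}(1-\gamma^\ell) \ne 0$ in $\cR_{p_j,m}$, then evaluate at an element $\omega$ of order $m' = m/p_j$ in the algebraic closure of $\F_{p_j}$ (via Lemma~\ref{lem-order}) and observe that $p_j \mid \ell$ and $m' \mid \ell$ together would force $\ell \equiv 0 \bmod m$, contradicting $0 \notin S$. The only cosmetic difference is that you phrase the step as a direct ring homomorphism $\gamma \mapsto \omega$, whereas the paper lifts the hypothetical identity to $\F_{p_j}[x]$ and derives a contradiction by substitution; these are two presentations of the same evaluation argument.
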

\begin{proof}
Suppose in contradiction that $(h(0) \mod p_j)= 0$, then
\[h(0) \mod p_j=f_j(1)=\prod_{\ell\in S,\ \ell=0\mod p_j}(1-\gamma^\ell)=0.\]
The above equation holds in the ring $\left(\Z_{p_j}[\gamma]/(\gamma^m-1)\right)$.Therefore, if we consider what happens in the ring $\Z_{p_i}[\gamma] \cong \F_{p_i}[x]$ (we replace the formal variable $\gamma$ with $x$ to highlight the fact that $x$ does not satisfy any relation) we get that
\begin{equation}\label{eq-identity}
	\prod_{\ell\in S,\ \ell=0\mod p_j}(1-x^\ell)=(x^m-1)\theta(x)
\end{equation}
for some polynomial $\theta(x)\in \F_{p_j}[x]$. The above equation is an identity in the ring $\F_{p_j}[x]$. So we can check its validity by substituting values for $x$ from the algebraic closure  of $\F_{p_j}$. Let $m' = m/p_j$ and let $\zeta$ be an element in the algebraic closure of $\F_{p_j}$ of order $m'$ (so $\zeta^\ell=1$ iff $m'$ divides $\ell$). Since $m'$ and $p_j$ are co-prime, such an element exists by Lemma~\ref{lem-order}. If we substitute $\zeta$ into Eq.~\ref{eq-identity}, the RHS is zero (since $m'$ divides $m$). However, each term in the LHS product is nonzero, since if $\ell =0 \mod p_j$ and $m'$ divides $\ell$ then $\ell = 0 \mod m$ but we know that $0\notin S$. Since we are working over the algebraic closure of $\F_{p_j}$ which is a field, the product of nonzero elements is nonzero. This is a contradiction, and so Eq.~\ref{eq-identity} does not hold.
\end{proof}

\section{Concluding remarks}\label{sec-conclude}
In this work we presented the first 2-server PIR scheme (information theoretic) with sub-polynomial cost. It is unclear what is the optimal communication cost of 2-server schemes and we conjecture that our protocol is far from optimal.% We now discuss some directions for further improvements.

One approach to decrease the communication cost is to take $m$ to be a product of $r>2$ prime factors in theorem \ref{Grolmusz} to get a larger $S$-matching vector family where $S=\set{a\in \Z_m: a\mod p_i \in \set{0,1}\  \forall\ i \in[r]}\setminus \set{0}$ which is of size $2^{r}-1$. So we need $2^{r-1}$ independent equations from each server to find $c_0$. We can ask the servers for derivatives of $F$ at $\gamma^{\bz+t\bv_\tau}$ up to order $2^{r-1}-1$. If these equations are `independent' i.e. the determinant of the coefficient matrix doesn't vanish then we can find $c_0$. If we can do this, we can decrease the cost to $n^{\bigo{2^r(\log\log n/\log n)^{1-1/r}}}$. But observe that for each $l\in S$, $l^2=l\mod m$ since $l\mod p_i \in \set{0,1}\ \forall i\in[r]$. So higher order derivatives of $g$ are equal to the first order derivative and we get repeated rows in the coefficient matrix $M$. One avenue for improvement could be by trying to construct $S$ such that elements of $S$ doesn't satisfy a low-degree monic polynomial.

\section{Acknowledgements} We would like to thank Klim Efremenko and Sergey Yekhanin for helpful comments. 

\bibliographystyle{alpha}

\bibliography{bibliography}

\end{document}